\newcommand*{\circled}[1]{\lower.7ex\hbox{\tikz\draw (0pt, 0pt)%
    circle (.42em) node {\makebox[0.1em][c]{\small #1}};}}
\newtheorem{theorem}{Theorem}
\newtheorem{lemma}{Lemma}
\newtheorem{Proposition}{Proposition}
\newtheorem{mydef}{Definition}
\begin{document}
\title{ Sum-Rate Maximization of RIS-Aided Multi-User MIMO Systems With Statistical CSI}
\author{Huan Zhang, Shaodan~Ma, Zheng Shi, Xin Zhao,
and Guanghua~Yang
\thanks{Huan Zhang and Shaodan Ma are with the State Key Laboratory of Internet
of Things for Smart City and the Department of Electrical and Computer
Engineering, University of Macau, Macao S.A.R. 999078, China (e-mails: cquptzh@gmail.com, shaodanma@um.edu.mo).}
\thanks{Zheng Shi and Guanghua Yang are with the Institute of Physical Internet and the School of Intelligent Systems Science
and Engineering, Jinan University, Zhuhai 519070, China (e-mails: shizheng0124@gmail.com, ghyang@jnu.edu.cn).}
\thanks{Xin Zhao is with the School of Information
and Electronics, Beijing Institute of Technology, Beijing 100081, China, and also with the Department of Electrical and Computer Engineering,
University of Macau, Taipa, Macao S.A.R. 999078, China (e-mail: xinzhao.eecs@gmail.com).}}

\maketitle
\begin{abstract}
This paper investigates a reconfigurable intelligent surface (RIS)-aided multi-user multiple-input multiple-output (MIMO) system by considering only the statistical channel state information (CSI) at the base station (BS). We aim to maximize its sum-rate via the joint optimization of beamforming at the BS and phase shifts at the RIS. However, the multi-user MIMO transmissions and the spatial correlations make the optimization cumbersome. For tractability, a deterministic approximation is derived for the sum-rate under a large number of the reflecting elements. By adopting the approximate sum-rate for maximization, the optimal designs of the transmit beamforming and the phase shifts can be decoupled and solved in closed-forms individually. More specifically, the global optimality of the transmit beamforming can be guaranteed by using water-filling algorithm and a sub-optimal solution of phase shifts can be obtained by using projected gradient ascent (PGA) algorithm. By comparing to the case of the instantaneous CSI assumed at the BS, the proposed algorithm based on statistical CSI can achieve comparable performance but with much lower complexity and signaling overhead, which is more affordable and appealing for practical applications. Moreover, the impact of spatial correlation is thoroughly examined by using majorization theory. 
\end{abstract}
\begin{IEEEkeywords}
Reconfigurable intelligent surface (RIS), multi-user MIMO, spatial correlations, phase shifts.
\end{IEEEkeywords}
\IEEEpeerreviewmaketitle
\section{Introduction}
\subsection{Background}
\IEEEPARstart{M}assive multiple-input multiple-output (MIMO), millimeter wave (mmWave) communications, ultra-dense networks (UDN), and non-orthogonal multiple access (NOMA) have been recognized as the key enabling techniques to achieve higher spectral efficiency and massive machine-type communications (mMTCs) for the next generation mobile cellular systems \cite{larsson2014massive,pi2011introduction,kamel2016ultra,shi2019zero}. Particularly, by deploying a massive number of antennas over 30-300 GHz bands, the mmWave massive MIMO communication paradigm has shown its great potentials for providing ultra-high data-rate transmission as well as massive device connectivity in the upcoming 5G systems \cite{busari2017millimeter}.
Although the mmWave massive MIMO has attracted numerous interests from both the academic and industry communities, the confronted challenges span the broad fields of communication engineering and allied disciplines, such as escalating signal processing complexity, short-range communications, increasing hardware costs, high power consumption, etc. Moreover, the UDN and NOMA techniques also bring the additional deployment costs and imperfect successive interference cancellation (SIC) bottleneck, respectively. To meet the need for economic and sustainable future cellular networks, more efficient techniques have been proposed to address the spectrum shortage problem and improve the system performance.

With the development of radio frequency (RF) micro electromechanical systems (MEMS) and metamaterial (e.g., metasurface), reconfigurable intelligent surfaces (RIS) is a brand-new technology which has gained significant momentum \cite{wu2020towards}. Specifically, RIS is an artificial metamaterial, which is composed of a large number of low-cost, passive and reflective elements with reconfigurable parameters. It is capable of digitally manipulating electromagnetic waves and intelligently adjusting the reflecting signal phase to achieve certain communication objectives, e.g., enhancing
the received signal energy, expanding the coverage region, alleviating interference, etc. \cite{ElMossallamy2020}. By comparing with conventional cooperative communication systems, i.e., amplify-and-forward (AF), decode-and-forward (DF), and compress-and-forward (CF) relaying schemes, RIS assisted wireless communications can obtain preferable electromagnetic propagation environment with limited power consumption.
\subsection{Related Works}
 By properly adjusting the phase shifts to improve the electromagnetic propagation environment with a low cost and energy consumption, the RIS has attracted tremendous attention from both industry and academics, recently \cite{wu2020towards,ElMossallamy2020,wu2019intelligent,wu2020beamforming,guo2020weighted,zhou2020intelligent,ye2020joint,pan2020multicell,zhang2020sum,pan2020intelligent, zheng2020intelligent,
mu2020exploiting,zhao2020performance,wang2020intelligent,li2020reconfigurable,zhu2020Stochastic,mishra2019channel,taha2019enabling,han2019large,zhang2020transmitter,wang2021joint,
chaaban2020opportunistic,alwazani2020intelligent,kammoun2020asymptotic}.
The authors in \cite{wu2019intelligent} proposed RIS-aided multiple-input single-output (MISO) system communications by jointly optimizing the transmit beamforming at the base station (BS) and reflect beamforming at the RIS to minimize the total transmit power consumption. Numerical results showed that the RIS not only achieves the comparable performance to the conventional massive MIMO or multi-antenna AF relay, but also is more cost efficient. Similar conclusions were drawn in \cite{wu2020beamforming} by extending the continuous phase shifts to discrete phase shifts due to the hardware limitation at reflecting elements. In \cite{guo2020weighted} and \cite{zhou2020intelligent}, the authors considered the sum-rate maximization problem for RIS-assisted multi-group multi-cast MISO and  multi-user MISO systems to find the optimal solution for the precoding matrix at the BS and the reflection coefficients at the RIS under both the power and unit-modulus constraints. In order to reap the benefit of diversity gain at user sides, the symbol error rate (SER) minimization problem for the RIS-assisted MIMO systems and the sum-rate maximization problem for the RIS-assisted multi-user MIMO systems also have been investigated in \cite{ye2020joint} and \cite{pan2020multicell}, respectively.
Furthermore, the RIS-assisted user cooperation has been extended to the wireless powered communications, NOMA network, backscatter communications, physical layer security, unmanned aerial vehicle (UAV) communications, millimeter wave networks, etc. \cite{pan2020intelligent,zheng2020intelligent,mu2020exploiting,zhao2020performance,wang2020intelligent,li2020reconfigurable,zhu2020Stochastic}.

Although the wireless communication systems can be significantly improved with the help of the RIS, the aforementioned works for RIS-aided  communication systems design, i.e., \cite{wu2019intelligent,wu2020beamforming,guo2020weighted,zhou2020intelligent,ye2020joint,pan2020multicell,pan2020intelligent,zhang2020sum,zheng2020intelligent,mu2020exploiting,zhao2020performance,wang2020intelligent,li2020reconfigurable,zhu2020Stochastic}, are all based on instantaneous channel state information (CSI) available at BS. As opposed to the traditional wireless systems where channel acquisition is a straightforward matter, due to the passive nature of RIS, the low-cost reflecting elements can not possess any active RF chains to facilitate channel estimation. In \cite{mishra2019channel}, the authors estimated the reflecting cascaded channel via element-wise on-off operation at each reflecting element. Another alternative approach to estimate the channel coefficients is leveraging compressive sensing and deep learning tools by proposing a novel RIS architecture, where sparse channel sensors are assumed  and a few number of elements are connected to the baseband of the RIS controller \cite{taha2019enabling}. However, the channel training overhead becomes excessively high as the number of RIS reflecting elements and/or RIS-aided users increases. So  the authors studied the RIS-aided communication systems design basing on statistical CSI in \cite{han2019large,zhang2020transmitter,wang2021joint}, by considering the practical CSI acquisition.  Furthermore, another difficulty for RIS-aided wireless communication systems is the joint design of the transmit precoding and the phase shift (active and passive beamforming) design at BS and RIS. Since the optimization problems are usually non-convex, the alternating algorithms have been widely adopted for sub-optimal solutions in \cite{wu2019intelligent,wu2020beamforming,guo2020weighted,zhou2020intelligent,ye2020joint,pan2020multicell,zhang2020sum,pan2020intelligent, zheng2020intelligent,mu2020exploiting,zhao2020performance,wang2020intelligent,li2020reconfigurable,zhu2020Stochastic,han2019large,zhang2020transmitter,wang2021joint} with high computational complexity.

Meanwhile, the spatial correlation between antenna or reflecting elements for RIS-aided system always exist in realistic propagation environments due to mutual antenna coupling, space limitations for adjacent antenna or reflecting elements, and non-rich scattering environments, etc. \cite{wang2021joint,chaaban2020opportunistic,kammoun2020asymptotic,alwazani2020intelligent,yang2020asymptotic}. However, for analytical simplicity, the RIS-aided MISO/MIMO systems for multi-user case in existing works, i.e., \cite{wu2020towards,ElMossallamy2020,wu2019intelligent,wu2020beamforming,zhou2020intelligent,guo2020weighted,pan2020multicell,zhang2020sum,ye2020joint, pan2020intelligent,zheng2020intelligent,mu2020exploiting,zhao2020performance,wang2020intelligent,li2020reconfigurable,zhu2020Stochastic,mishra2019channel,taha2019enabling,han2019large,zhang2020transmitter}, assumed the independence among antenna and reflecting elements. Even though the spatial correlation has been taken into account for RIS in \cite{chaaban2020opportunistic,kammoun2020asymptotic,alwazani2020intelligent}, those works only considered single-side spatial correlation on RIS (from RIS to users link) for single-input single-output (SISO) or MISO communication systems.
\subsection{Motivation and Contributions}
Motivated by the above challenges and issues, we focus on the maximization of the sum-rate of an RIS-assisted multi-user MIMO systems by leveraging only statistical CSI, in which the spatially correlated channels are considered. The main contributions of this paper can be summarized as follows.
\begin{itemize}
  \item  To reap the benefit of diversity gain from MIMO and RIS, we consider an RIS-aided multi-user MIMO system. Unlike the aforementioned works, the spatial correlations at BS, RIS and users are considered in the paper. Based on this practical system model, the random matrix theory is invoked to derive the deterministic equivalent approximation for the sum-rate.
  \item Thanks to the deterministic equivalent results, the impacts of the spatial correlation on each terminal are thoroughly investigated with the help of majorization theory. It is obtained that the approximation sum-rate is irrelevant to the antenna correlation at the BS, while it significantly depends on the spatial correlation at user and the transceiver sides at RIS as the number of reflecting elements increases. More specifically, the spatial correlation at user has a negative impact on sum-rate, but the sum-rate benefits from the transceiver sides' correlation at RIS.
  \item In contrast to \cite{wu2020towards,ElMossallamy2020,wu2019intelligent,wu2020beamforming,
      zhou2020intelligent,guo2020weighted,pan2020multicell,zhang2020sum,ye2020joint,
      pan2020intelligent,zheng2020intelligent,mu2020exploiting,zhao2020performance,
      wang2020intelligent,li2020reconfigurable,zhu2020Stochastic,mishra2019channel,
      taha2019enabling,han2019large,zhang2020transmitter} that assumed instantaneous CSI, the statistical CSI is exploited to maximize the sum-rate in this paper, including the path-loss coefficients, the second-order statistics of small-scale fading, i.e., spatial correlation matrices. 
      By comparing with the assumption of the instantaneous CSI, the statistical CSI changes slowly and does not require high signaling overhead.
  \item As opposed to \cite{wang2021joint} where the joint design of the transmit beamforming and the phase shifts are highly coupled and a sub-optimal solution was obtained by using the alternative algorithm, we resort to the deterministic approximation to decouple the joint optimization problem into two sub-problems, i.e., the transmit beamforming design and the phase shifts optimization. The two sub-problems can be solved in closed-forms with two efficient algorithms, respectively, i.e., water filling algorithm and projected gradient ascent (PGA) algorithm. Since the proposed algorithms don't require the knowledge of instantaneous CSI, it is more practical especially for the case of time-varying channels, where the instantaneous channel state estimation is hardly impossible for passive reflecting elements. Moreover, the global optimality of the proposed transmit beamforming design is guaranteed.
\end{itemize}
\subsection{Structure of the Paper}
The rest of the paper is organized as follows. In Section II, the system model of RIS aided multi-user MIMO communication under spatial correlation is presented. Section III derives the deterministic equivalent results for sum-rate, with which the impact of spatial correlation is thoroughly studied. In section IV, the asymptotic sum-rate is maximized by jointly optimizing the transmit beamforming matrix and phase shift vector. In Section V, we present and discuss the numerical results and the Monte-Carlo simulations. Finally, Section VI concludes the paper.

\emph{Notations}: The following notations are used throughout this paper. We use lower-case and upper-case letters to indicate column vectors and matrices, respectively. $\mathbf{I}_N$ denotes an $N\times N$ identity matrix. The superscripts $(\cdot)^{\rm{T}}$ and $(\cdot)^{\mathrm{H}}$ denote the transpose and the transpose-conjugate operations, respectively. The notations $(\cdot)^{\frac{1}{2}}$, $\mathrm{Tr}(\cdot)$ and $\mathrm{det}(\cdot)$ represent the matrix principal square root, trace and determinant of matrices, respectively. $(\cdot)^\ast$ denotes complex conjugate.
$\mathbb{C}$ stands for the set of complex numbers. $\mathbf{A}(i,j)$ refers to the $(i,j)$-th element of matrix $\mathbf{A}$. We denote $\mathbf{A}\succeq0$ if $\mathbf{A}$ is positive semidefinite. $\mathbb{E}(\cdot)$ is the statistical expectation operator.
%
%
%

\section{System Model and Problem Formulation}
\subsection{Transmission Model}
\begin{figure}[!h]
\centering
\includegraphics[width=3.5in]{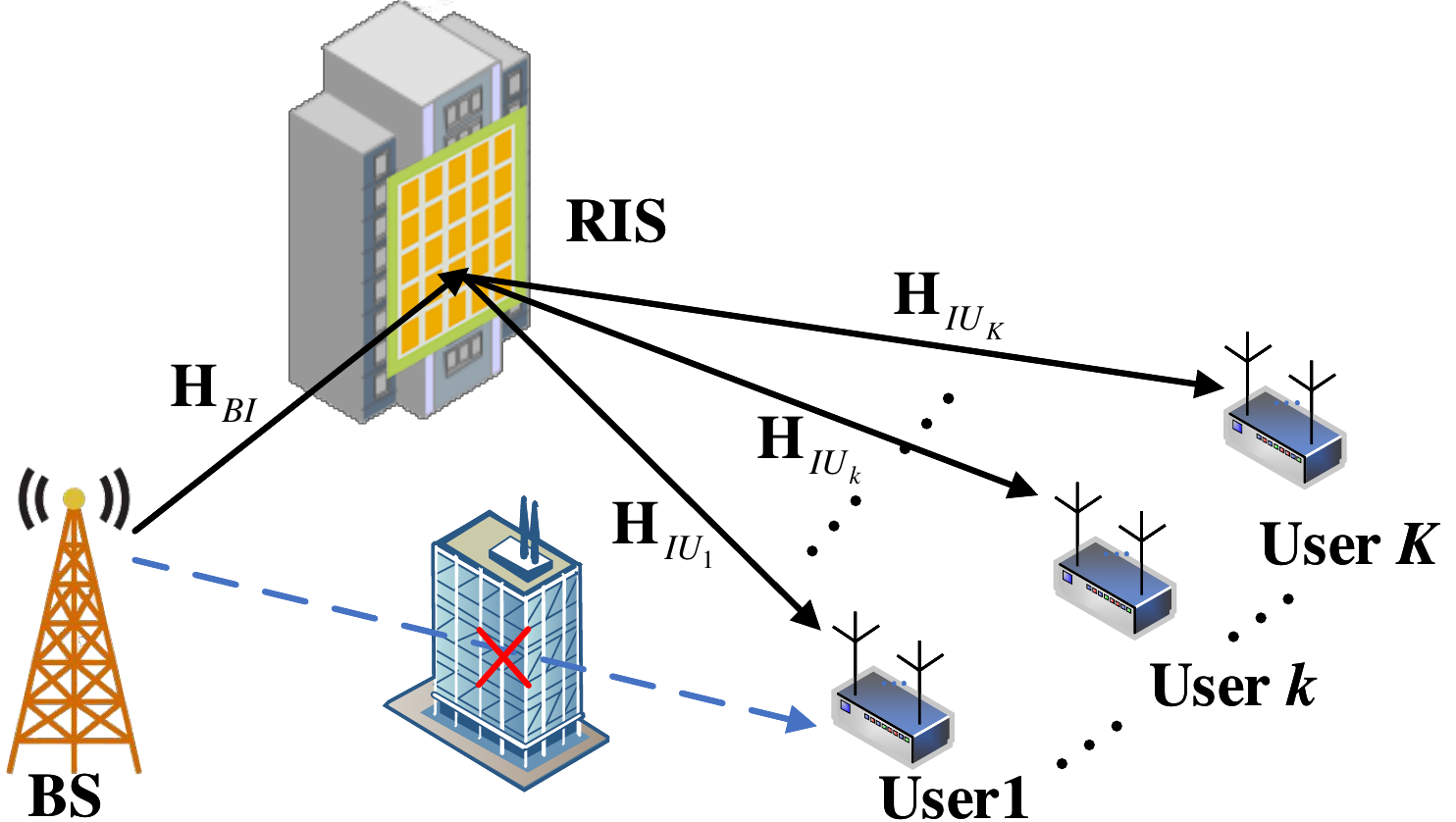} 
\caption{An RIS-aided multi-user MIMO downlink communication system.}
\label{system_model}
\end{figure}
This paper investigates an RIS-aided multi-user MIMO communication systems as shown
in Fig. \ref{system_model}, where a multi-antenna BS serves $K$ users with the help of a multi-reflecting element RIS. We assume that the BS and each user are equipped with $M$ and $L$ antennas, respectively,  and the users are located in the vicinity of the RIS. 
To enhance the system performance, a uniform planar
array (UPA) RIS composed of a large number of passive and low-cost reflecting elements is mounted on the wall of a surrounding high-rise building to assist the BS in communicating with the users, and the massive number of reflecting elements at the RIS is $N$, i.e., $N\gg M\geq L$. Similarly to \cite{wang2021joint} and \cite{kammoun2020asymptotic}, we assume that the direct links from the BS to users are unavailable due to the obstacles, such as buildings. 
The channel matrices from the BS to the RIS and from the RIS to the $k$-th user are denoted by $\mathbf{H}_{BI}\in \mathbb{C}^{N\times M}$ and $\mathbf{H}_{IU_{k}}\in\mathbb{C}^{L\times N}$, respectively, where $k\in \{1,\cdots,K\}$. We denote the diagonal reflection matrix of RIS as $\mathbf{\Theta}=\mathrm{diag}(e^{j\theta_1},\cdots,e^{j\theta_N})$, the phase shift vector by $\boldsymbol{\theta}=[\theta_1,\cdots,\theta_N]^{\mathrm{T}}$, where $\theta_n\in [0,2\pi]$ represents phase shift after reflecting by the $n$th element. Accordingly, the received signal at user $k$ is given by
\begin{equation}\label{y_k}
\mathbf{y}_k = \mathbf{H}_{IU_{k}}\mathbf{\Theta} \mathbf{H}_{BI}\mathbf{W}\mathbf{s}+\mathbf{n}_k,
\end{equation}
where
 \begin{itemize}
   \item $\mathbf{s}=[\mathbf{s}_1^{\mathrm{T}},\cdots,\mathbf{s}_k^{\mathrm{T}},\cdots,\mathbf{s}_K^{T}]^\mathrm{T}\in \mathbb{C}^{Kd\times1}$, $\mathbf{s}_k \in  \mathbb{C}^{d\times 1}$ denotes the transmit symbol vector of user $k$ with $d$ data steams and satisfying $1\leq Kd \leq {\rm min}(M,KL)$. Further, it is assumed that  $\mathbb{E}(\mathbf{s}_i\mathbf{s}_i^\mathrm{H})=\mathbf{I}_{d}$, and $\mathbb{E}(\mathbf{s}_i\mathbf{s}_j^\mathrm{H})=\mathbf{0}$, for $i\neq j$.
   \item
        $\mathbf{W}=[\mathbf{W}_1,\cdots,\mathbf{W}_k,\cdots,\mathbf{W}_K]$ is the $M\times Kd$ transmit  beamforming matrix at the BS, in which $\mathbf{W}_k\in \mathcal{C}^{M\times d}$ is  the linear precoding matrix for the $k$th user and with total transmit power constraint $\mathrm{Tr}(\mathbf{W}\mathbf{W}^\mathrm{H})=\sum\limits_{k = 1}^K{\mathrm{Tr}}(\mathbf{W}_k\mathbf{W}_k^\mathrm{H})\leq P_{max}$.
   \item $\mathbf{n}_k\sim {\cal{CN}}(\mathbf{0}, \sigma_k^2\mathbf{I}_{L})$ represents the complex Gaussian noises at user $k$.
 \end{itemize}

Then, the instantaneous achievable rate for the $k$ user is given by
\begin{equation}\label{R_k}
{R}_k= {\log _2}\det\left[ {{\bf{I}}_L} + \mathbf{\Psi}_k{\left(\sum\limits_{i = 1,i\neq k}^K \mathbf{\Psi}_i+
{{\sigma_k^2}\mathbf{I}_L}\right)}^{ - 1} \right],
\end{equation}
where $\mathbf{\Psi}_j=\mathbf{H}_{IU_{k}}\mathbf{\Theta} \mathbf{H}_{BI}\mathbf{W}_j(\mathbf{H}_{IU_{k}}\mathbf{\Theta} \mathbf{H}_{BI}\mathbf{W}_j)^\mathrm{H}, j\in\{k,i\}$.
\subsection{Spatially Correlated Channels}
Due to multiple antennas and multiple reflecting elements mounted at users, BS and RIS, the close spacing among antennas/reflecting elements will result in the spatial correlation in realistic propagation environments. The spatial correlation significantly affects the achievable rate, which cannot be ignored especially for limited size of RIS and terminals \cite{kammoun2020asymptotic,yang2020asymptotic,zhang2018space}. To account for the impact of spatial correlation, the channel matrices $\mathbf{H}_{BI}$ and $\mathbf{H}_{IU_{k}}$ are modeled by the widely used Kronecker correlation channel model as\cite{yang2020asymptotic}\footnote{
Although we assume that all users have the common channel correlation matrices at RIS for notational simplicity, i.e., $\mathbf{T}_I$ and $\mathbf{R}_I$, the analytical result can be easily extended to the case of the users with different correlation matrices by replacing $\mathbf{T}_{I_k}$ and $\mathbf{R}_{I_k}$ with $\mathbf{T}_I$ and $\mathbf{R}_I$, respectively.}
\begin{equation}\label{channel_model}
\mathbf{C}=\beta_{\bf C}^{\frac{1}{2}}\mathbf{R}_{\bf C}^{\frac{1}{2}}\widetilde{\mathbf{C}}\mathbf{T}_{\bf C}^{\frac{1}{2}}, \,\mathrm{for}\,(\mathbf{C},\widetilde{\mathbf{C}})\in\{(\mathbf{H}_{BI},\widetilde{\mathbf{H}}_{BI}),
(\mathbf{H}_{IU_{k}},\widetilde{\mathbf{H}}_{IU_{k}})\},
\end{equation}
where the channel model in \eqref{channel_model} is composed of four independent components, including $\beta_{\bf C}$, $\widetilde{\mathbf{C}}$, $\mathbf{R}_{\bf C}$ and $\mathbf{T}_{\bf C}$. Specifically, $\beta_{\bf C}$ accounts for the (long-term) path loss effect. $\widetilde{\mathbf{C}}$ is a random matrix with independent identically distributed (i.i.d.) Gaussian random entries of zero mean and unity variance, i.e., ${\rm{vec}}(\widetilde{\mathbf{C}})\sim{\cal{CN}}({\mathbf{0}},{\mathbf{I}}\otimes{\mathbf{I}})$,
$\widetilde{\mathbf{H}}_{BI}\in\mathbb{C}^{N\times M}$, and $\widetilde{\mathbf{H}}_{IU_k}\in\mathbb{C}^{L\times N}$. $\mathbf{T}_{\bf C}$ and $\mathbf{R}_{\bf C}$ are respectively termed as the transmit/reflect and receive correlation matrices, $\mathbf{T}_{\bf C}\in\{\mathbf{T}_B,\mathbf{T}_I\}$ and $\mathbf{R}_{\bf C}\in\{\mathbf{R}_I,\mathbf{R}_{U_k}\}$, where $\mathbf{T}_B\in\mathbb{C}^{M\times M}$ and  $\mathbf{T}_I\in\mathbb{C}^{N\times N}$ stand for the transmit correlation matrix at the BS and transmit side correlation matrix at RIS, respectively, and $\mathbf{R}_I\in\mathbb{C}^{N\times N}$ and $\mathbf{R}_{U_k}\in\mathbb{C}^{L\times L}$ represent the correlation matrix at the receive side of RIS and the $k$th user, respectively. In addition, all the correlation matrices are positive semi-definite Hermitian matrices \cite{noh2014pilot}. For analytical simplicity, we assume those correlation matrices are normalized such that $\mathrm{Tr}(\mathbf{T}_{B})=M$, $\mathrm{Tr}(\mathbf{T}_{I})=\mathrm{Tr}(\mathbf{R}_{I})=N$ and $\mathrm{Tr}(\mathbf{R}_{U_k})=L$. Moreover, it is worth noting that all the mentioned spatial correlated matrices can be assumed to be time-invariant, because the antenna array response in dense scattering environments changes slowly \cite{noh2014pilot}.
\subsection{Problem Formulation}
In this paper, we focus on the optimal design of transmit beamforming matrix $\mathbf{W}^\star$ at BS and phase shift vector $\boldsymbol{\theta}^\star$ at RIS by maximizing the sum-rate as
\begin{equation}
\begin{aligned}
 \label{P}
\mathcal{P}\,:\,&\max\limits_{\mathbf{W},\boldsymbol{\theta}}\quad \sum\limits_{k = 1}^K R_k, \\
& \begin{array}{r@{\quad}r@{}l@{\quad}l}
\mathrm{s.t.}
&(\mathrm{C1}):&\mathrm{Tr}(\mathbf{W}\mathbf{W}^\mathrm{H})\leq P_{max},\\
&(\mathrm{C2}):&0\leq\theta_n\leq 2\pi, n=1,\cdots,N,
\end{array}
\end{aligned}
\end{equation}
where (C1) and (C2) represent the transmit power constraint and phase shift range constraint, respectively.

It is obviously found that problem $\mathcal{P}$ is difficult to solve, due to the coupling relationship between the precoding matrix $\bf W$ and phase shift vector $\boldsymbol{\theta}$, and non-convexity of the phase shift constraint.  To tackle this
challenging problem, most of works  resort to an alternating way to optimize the
transmit beamforming and phase shift until the convergence is reached, e.g.  \cite{wu2019intelligent,wu2020beamforming,zhou2020intelligent,guo2020weighted,pan2020multicell,zhang2020sum,ye2020joint,pan2020intelligent}. Although the optimization variables can be decoupled with the help of alternating optimization and the subproblem at the $t$-th iteration can be efficiently solved, the objective functions in  \cite{wu2019intelligent,wu2020beamforming,zhou2020intelligent,guo2020weighted,pan2020multicell,zhang2020sum,ye2020joint,pan2020intelligent}  are derived under the requirement of instantaneous CSI at both the BS and RIS. However, this assumption is impractical for a large scale number of antennas and reflecting elements. In practice, since the CSI is usually estimated with the aid of pilot sequences, the passive RIS hardware constraint entails prohibitively high channel estimation accuracy and frequent signal exchange overhead between BS and RIS\cite{mishra2019channel,taha2019enabling}.
Hence, the prior alternating optimization algorithms relying on instantaneous CSI have to face the critical issue of channel estimation and high phase adjustment costs for RIS-assisted systems. Besides, it is more intractable to maximize the instantaneous achievable rate in (\ref{P}), by comparing to RIS-assisted systems with single antenna receiver or single user in \cite{wu2019intelligent,wu2020beamforming,zhou2020intelligent,guo2020weighted,zhang2020sum,ye2020joint,pan2020intelligent,chaaban2020opportunistic,kammoun2020asymptotic,alwazani2020intelligent}, due to the involvement of multi-data streams, multi-antenna transceivers and multiple users.

To overcome the above issues, we assume that the RIS is equipped with a large number of reflecting elements ($N\rightarrow \infty$) to exploit the channel hardening property for RIS.  Thanks to the low energy cost features of RIS-assisted system, hundreds of reflecting elements for RIS are more affordable by comparing to hundreds of antennas equipped in massive MIMO systems in practice \cite{tang2020wireless}. In such circumstance, we derive a deterministic equivalent expression for the sum-rate by using random matrix theory tools. The analytical result enables us to reformulate a new optimization problem by exploiting statistical CSI as well as gain some meaningful insights.
\section{Analysis of Achievable Rate}

\subsection{Deterministic Equivalent Analysis} \label{single_cell}
Since the multi-user MIMO, multi-data streams and RIS cascaded channel are considered, the instantaneous sum-rate expression involves complicated matrix operations such as matrix inverse, determinant, and matrix chain products, which significant challenges on the jointly optimization design and mathematical analysis.
In order to obtain a tractable optimization problem and gain more insights, prior to designing the transmit beamforming matrix and the phase shifts vector, we first derive the deterministic sum-rate for the RIS-assisted multi-user MIMO systems. The deterministic  approximation sum-rate states in Theorem \ref{theorem1}.
\begin{theorem}\label{theorem1}
As the number of reflecting elements at the RIS increases, i.e. $N \to \infty $, a deterministic approximation of the sum-rate is given by
\begin{equation}\label{R_k_asy}
\begin{aligned}
R&\xrightarrow[{N \to \infty }]{{a.s.}}\sum\limits_{k = 1}^K{\log _2}\det \bigg[ {{\bf{I}}_L}+ \mathbf{\Phi}_k{ \big(\sum\limits_{i = 1,i\neq k}^K\mathbf{\Phi}_i+
{\frac{{\sigma_k^2}}{\varphi_k{\mathrm{Tr}\left(\mathbf{T}_I\mathbf{\Theta}\mathbf{R}_I\mathbf{\Theta}^{\mathrm{H}}\right)}}\mathbf{I}_L} \big)}^{ - 1} \bigg]\\
&\triangleq R^{asy},
\end{aligned}
\end{equation}
where $\mathbf{\Phi}_j=\frac{1}{N}\mathrm{Tr}\left(\mathbf{T}_B {\mathbf{W}_j}{\mathbf{W}_j^\mathrm{H}}\right){ \mathbf{R}}_{U_{k}}, j\in\{k,i\}$ and  $\varphi_k={N}\beta_{\mathbf{H}_{IU_{k}}}\beta_{\mathbf{H}_{BI}}$.
\begin{proof}
See Appendix \ref{app:proof_Theorem1}.
\end{proof}
\end{theorem}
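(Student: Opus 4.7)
The plan is to obtain a deterministic equivalent for each signal/interference covariance $\mathbf{\Psi}_j$ and then pass to the $\log\det$ via continuity. Substituting the Kronecker decomposition \eqref{channel_model} into $\mathbf{\Psi}_j$ and pulling out the deterministic factors yields
\begin{equation*}
\mathbf{\Psi}_j = \beta_{\mathbf{H}_{IU_{k}}}\beta_{\mathbf{H}_{BI}}\,\mathbf{R}_{U_{k}}^{1/2}\,\tilde{\mathbf{H}}_{IU_{k}}\,\mathbf{A}_j\,\tilde{\mathbf{H}}_{IU_{k}}^{\mathrm{H}}\,\mathbf{R}_{U_{k}}^{1/2},
\end{equation*}
with $\mathbf{A}_j \triangleq \mathbf{T}_I^{1/2}\mathbf{\Theta}\mathbf{R}_I^{1/2}\tilde{\mathbf{H}}_{BI}\mathbf{T}_B^{1/2}\mathbf{W}_j\mathbf{W}_j^{\mathrm{H}}\mathbf{T}_B^{1/2}\tilde{\mathbf{H}}_{BI}^{\mathrm{H}}\mathbf{R}_I^{1/2}\mathbf{\Theta}^{\mathrm{H}}\mathbf{T}_I^{1/2}$ depending only on the inner Gaussian matrix $\tilde{\mathbf{H}}_{BI}$. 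All randomness is now contained in the two independent matrices $\tilde{\mathbf{H}}_{BI}$ and $\tilde{\mathbf{H}}_{IU_{k}}$, so the deterministic equivalent can be constructed by peeling them off one at a time through two successive applications of the matrix trace lemma for Gaussian quadratic forms (e.g.\ Lemma~B.26 of Bai--Silverstein).

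Conditioning on $\tilde{\mathbf{H}}_{BI}$ so that $\mathbf{A}_j$ is deterministic, the first trace-lemma application gives
\begin{equation*}
\tilde{\mathbf{H}}_{IU_{k}}\mathbf{A}_j\tilde{\mathbf{H}}_{IU_{k}}^{\mathrm{H}}-\mathrm{Tr}(\mathbf{A}_j)\,\mathbf{I}_L\xrightarrow[N\to\infty]{a.s.}\mathbf{0}.
\end{equation*}
To pin down the scalar $\mathrm{Tr}(\mathbf{A}_j)$, I would use the cyclic property of the trace to rewrite it as $\mathrm{Tr}(\tilde{\mathbf{H}}_{BI}^{\mathrm{H}}\mathbf{B}\tilde{\mathbf{H}}_{BI}\mathbf{D})$ with $\mathbf{B}\triangleq\mathbf{R}_I^{1/2}\mathbf{\Theta}^{\mathrm{H}}\mathbf{T}_I\mathbf{\Theta}\mathbf{R}_I^{1/2}$ and $\mathbf{D}\triangleq\mathbf{T}_B^{1/2}\mathbf{W}_j\mathbf{W}_j^{\mathrm{H}}\mathbf{T}_B^{1/2}$. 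Since $\mathbb{E}[\tilde{\mathbf{H}}_{BI}^{\mathrm{H}}\mathbf{B}\tilde{\mathbf{H}}_{BI}]=\mathrm{Tr}(\mathbf{B})\mathbf{I}_M$, a second application of the trace lemma yields $\mathrm{Tr}(\mathbf{A}_j)-\mathrm{Tr}(\mathbf{B})\mathrm{Tr}(\mathbf{D})\xrightarrow{a.s.}0$, and a further cyclic rearrangement identifies $\mathrm{Tr}(\mathbf{B})=\mathrm{Tr}(\mathbf{T}_I\mathbf{\Theta}\mathbf{R}_I\mathbf{\Theta}^{\mathrm{H}})$ and $\mathrm{Tr}(\mathbf{D})=\mathrm{Tr}(\mathbf{T}_B\mathbf{W}_j\mathbf{W}_j^{\mathrm{H}})$.

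Combining the two steps and restoring the outer $\mathbf{R}_{U_{k}}^{1/2}$ factors yields $\mathbf{\Psi}_j\xrightarrow{a.s.}\varphi_k\,\mathrm{Tr}(\mathbf{T}_I\mathbf{\Theta}\mathbf{R}_I\mathbf{\Theta}^{\mathrm{H}})\,\mathbf{\Phi}_j$, where the explicit $N$ inside $\varphi_k$ absorbs the $1/N$ sitting in $\mathbf{\Phi}_j$. Factoring the common positive scalar $\varphi_k\,\mathrm{Tr}(\mathbf{T}_I\mathbf{\Theta}\mathbf{R}_I\mathbf{\Theta}^{\mathrm{H}})$ out of both numerator and denominator of the argument of $\log\det$ in $R_k$ then reproduces $R_k^{asy}$ exactly, and the almost-sure convergence of $R_k$ itself follows from continuity of $\log\det(\mathbf{I}_L+\mathbf{X}\mathbf{Y}^{-1})$ on the region $\mathbf{Y}\succ\mathbf{0}$, which is guaranteed by the additive $\sigma_k^2\mathbf{I}_L$ term. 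The main obstacle I anticipate is the error control in the first concentration step: $\mathbf{A}_j$ has rank at most $d$ while its trace grows like $N$, so one has to carefully track error bounds (typically via the Hanson--Wright inequality together with the fact that the outer dimensions $L$ and $M$ stay fixed) to ensure the two successive approximation errors do not overwhelm the deterministic leading terms under the normalizations $\mathrm{Tr}(\mathbf{T}_I)=\mathrm{Tr}(\mathbf{R}_I)=N$; once these regularity conditions are settled, the two trace-lemma applications and cyclic manipulations proceed routinely.
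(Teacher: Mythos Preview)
Your approach is essentially the paper's. The paper packages your two successive trace-lemma peelings into a single quoted result (its Lemma~\ref{lemma1}, imported from \cite{feng2019two}) for the product $\frac{1}{N}\mathbf{H}_2\mathbf{H}_1\mathbf{W}\mathbf{H}_1^{\mathrm H}\mathbf{H}_2^{\mathrm H}$, applies it to obtain the deterministic equivalent of $\mathbf{\Psi}_k/N$ in \eqref{Psi_k}, and then substitutes into \eqref{R_k}; your explicit conditioning, cyclic-trace identifications, and continuity step are precisely what that lemma encodes.

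One comment on the obstacle you anticipate at the end: it is genuine rather than a bookkeeping nuisance. Conditionally on $\widetilde{\mathbf{H}}_{BI}$, the matrix $\mathbf{A}_j$ has rank at most $M$ with nonzero eigenvalues of order $N$, so $\mathrm{Tr}(\mathbf{A}_j^2)$ is of order $N^2$ and the conditional variance of each entry of $\tfrac{1}{N}\widetilde{\mathbf{H}}_{IU_k}\mathbf{A}_j\widetilde{\mathbf{H}}_{IU_k}^{\mathrm H}$ stays of order one, not $o(1)$, when $L$ and $M$ are held fixed. Hanson--Wright concentrates at scale $\sqrt{\mathrm{Tr}(\mathbf{A}_j^2)}$ and therefore cannot close this gap; already in the scalar case $L=M=1$ with identity correlations one has $\tfrac{1}{N}|\widetilde{\mathbf{h}}^{\mathrm H}\widetilde{\mathbf{g}}|^2$ converging in law to an exponential random variable, not to the constant $1$. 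The paper does not confront this either---it defers entirely to the cited lemma---so the ``a.s.'' in the statement should be read as a large-system heuristic that tightens when $M$ also grows, or alternatively as a statement about the ergodic rate $\mathbb{E}[R]$, for which your two mean computations together with Fubini would suffice without any concentration step.
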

By considering a large number of reflecting elements in RIS-assisted systems, it is worth noting that the instantaneous sum-rate converges to a deterministic equivalent which only depends on the statistical CSI, including path-loss coefficients and the spatial correlation matrices. 
Similar to the massive MIMO systems, the deterministic property for the large scale RIS-assisted wireless communication systems is due to the channel hardening effect, in which the channel variations become small and the fading channels turn to be deterministic with the increase of the number of reflecting elements on the RIS \cite{hochwald2004multiple}.

With the help of random matrix theory, the sum-rate for  RIS-aided multi-user MIMO scenarios can be represented by tractable deterministic equivalent as (\ref{R_k_asy}). Section \ref{section_V} will verify that the approximate result matches well with Monte Carlo simulations if the RIS equips with more than 200 reflecting elements.\footnote{It is worth noticing that the  deterministic equivalent conclusions for massive MIMO systems have been  investigated in \cite{feng2017power,feng2018impact,feng2019two}. Although the channel hardening property is also applicable to massive MIMO systems, the deployment costs for hundreds of antenna array are very high. Comparing with Massive MIMO, the large scale RIS with hundreds of passive reflecting element requirements are more affordable and its prototype has already been designed in \cite{tang2020wireless}.} However, a large number of reflecting elements on RIS will cause serious spatial correlation. Hence, it is necessary and interesting to investigate the impact of spatial correlation for RIS-aided MIMO systems, and some meaningful insights will be provided.
\subsection{Impact of Spatial Correlation}\label{correlation_impact}
From the deterministic equivalent of the sum-rate in (\ref{R_k_asy}), the asymptotic sum-rate is irrelevant to the transmit antenna correlation at BS $\mathbf{T}_B$, but depends heavily on the antenna correlation at the RIS and user sides through $\mathrm{Tr}(\cdot)$ and $\mathrm{det}(\cdot)$, respectively. However, the transmit beamforming matrix $\mathbf{W}$ and  diagonal reflection matrix $\mathbf{\Theta}$ in these matrix operations cause difficulties to investigate the impact of spatial correlation analytical. To thoroughly investigate the impact of spatial correlation, we assume equal power allocation (EPA) at the BS and identity reflection matrix, i.e., $\mathbf{\Theta}=\mathbf{I}_N$, to ease the analysis.  Accordingly, the asymptotic sum-rate can be simplified as
\begin{equation}\label{R_k_asy_corr}
\begin{aligned}
\bar{R}^{asy}&=\sum\limits_{k = 1}^K{\log _2}\det\bigg[ {{\bf{I}}_L} +\epsilon\mathbf{R}_{U_k} {\left(({K - 1})\epsilon\mathbf{R}_{U_k}+
{\frac{{\sigma_k^2}}{\varphi_k{\rm{Tr}}\left( {{\bf{T}}_I}{{\bf{R}}_I} \right)}\mathbf{I}_L}\right)}^{ - 1}\bigg],
\end{aligned}
\end{equation}
where $\epsilon=\frac{P_{max }}{KN}$.

  Furthermore, to provide qualitative conclusion on the impact of the spatial correlation, some basic concepts are introduced to facilitate our analysis, including majorization, Schur-convexity, and Schur-concavity.
\begin{mydef}\label{def1}
(\textbf{Majorization\cite{marshall1979inequalities}}) For two positive semi-definite matrices $\mathbf{A}_{1}$ and $\mathbf{A}_{2}$ with identical dimension $N\times N$,  the descending order vector $\boldsymbol{\lambda}_1=(\lambda_{1,1},\cdots,\lambda_{1,{N}})^\mathrm{T}$ and $\boldsymbol{\lambda}_2=(\lambda_{2,1},\cdots,\lambda_{2,{N}})^\mathrm{T}$ represent the vectors of the eigenvalues for $\mathbf{A}_{1}$ and $\mathbf{A}_{2}$, respectively. If
 \begin{equation}
\begin{aligned}
\left\{ \begin{array}{l}
\sum\limits_{i = 1}^k {{\lambda_{1,i}}}  \ge \sum\limits_{i = 1}^k {{\lambda_{2,i}}} ,{\mkern 1mu} {\mkern 1mu} {\mkern 1mu} k = 1,2, \ldots ,N - 1{\mkern 1mu} {\mkern 1mu} \\
\sum\limits_{i = 1}^N {{\lambda_{1,i}}}  = \sum\limits_{i = 1}^N {\lambda_{2,i}}
\end{array} \right.
\end{aligned},
\end{equation}
  we say $\boldsymbol{\lambda}_1\succeq\boldsymbol{\lambda}_2$ and the matrix $\mathbf{A}_{1}$ is more correlated than the matrix $\mathbf{A}_{2}$ \cite{huan2020rate}. Therefore, the eigenvalue vector $[\underbrace{N,0,\cdots,0}_N]^\mathrm{T}$ and $[\underbrace{{1},1,\cdots,1}_N]^\mathrm{T}$ correspond to full- and independent-correlated channel models w.r.t. matrix $\mathbf{A}_1$.
\end{mydef}
\begin{mydef}(\textbf{Schur-Convexity and Schur-Concavity\cite{marshall1979inequalities}})\label{def2}
For any two arbitrary vectors $\mathbf{x},\mathbf{y}\in \mathbb{R}^n$ and real valued function $f:\mathbb{R}^n\to \mathbb{R}$, if $\mathbf{x}\succeq \mathbf{y}$, $f$ is said to be Schur-Convexity for $f(\mathbf{x})\geq f(\mathbf{y})$. Conversely, $f$ is said to be Schur-Concavity for $f(\mathbf{x})\leq f(\mathbf{y})$.
\end{mydef}

 In what follows, the impacts of spatial correlations at the $k$-th user and the RIS on the  asymptotic sum-rate will be examined individually.
\subsubsection{Impact of $\mathbf{R}_{U_k}$}
By capitalizing on eigenvalue decomposition (EVD) of $\mathbf{R}_{U_k}=\mathbf{U}_{U_k}\mathbf{\Lambda}_{U_k} \mathbf{U}_{U_k}^\mathrm{H}$ into (\ref{R_k_asy_corr}), where $\mathbf{U}_{U_k}\in \mathbb{C}^{L\times L}$ is a unitary matrix and $\mathbf{\Lambda}_{U_k}={\rm diag}(\lambda_{U_{k_1}},\cdots,\lambda_{U_{k_L}})$ denotes a diagonal matrix and its diagonal elements are the eigenvalues of $\mathbf{R}_{U_k}$. Then, for deterministic matrices $\mathbf{T}_I$ and $\mathbf{R}_I$, the asymptotic sum-rate w.r.t. $\mathbf{R}_{U_k}$ is re-expressed as
\begin{equation}
\bar{R}^{asy}(\mathbf{R}_{U_k})=\sum\limits_{k = 1}^K\sum\limits_{l = 1}^L{\log _2}\big[{1 + ({K - 1 + \frac{\zeta _k}{\epsilon\lambda_{U_{k_l}}}}})^{-1}\big],
\end{equation}
where $\zeta_k=\frac{\sigma_k^2}{\varphi_k{\rm{Tr}}\left( {{\bf{T}}_I}{{\bf{R}}_I} \right)}$ and $\frac{\zeta_k}{\epsilon}>0$. Since the function $f({\lambda_{U_{k_l}}})={\log _2}\left[{1 + ({K - 1 + \frac{\zeta_k}{\epsilon}{\lambda_{U_{k_l}}^{ - 1}}}})^{-1}\right]$ is concave and twice differentiable w.r.t. $\lambda_{U_{k_l}}$, by using  \cite[Proposition 3.C.1]{marshall1979inequalities}, we have the asymptotic sum-rate is Schur-concave w.r.t. $\boldsymbol{\lambda}_{U_k}=[\lambda_{U_{k_1}},\cdots,\lambda_{U_{k_L}}]^\mathrm{T}$. With the help of Definitions \ref{def1} and \ref{def2}, we directly have the following theorem.
\begin{theorem}\label{theorem2}
If $\mathbf{R}_{U_{k_1}}$ is more correlated than $\mathbf{R}_{U_{k_2}}$, i.e., $\boldsymbol{\lambda}_{U_{k_1}}\succeq\boldsymbol{\lambda}_{U_{k_2}}$. We have
\begin{equation}
\bar{R}^{asy}(\mathbf{R}_{U_{k_1}})\leq \bar{R}^{asy}(\mathbf{R}_{U_{k_2}}).
\end{equation}
\end{theorem}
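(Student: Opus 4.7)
The plan is to present the stated inequality as an immediate consequence of the Schur-concavity property that the paragraph preceding the theorem has already established, combined with the definitions of majorization and Schur-concavity (Definitions~\ref{def1}--\ref{def2}). The essential observation is that $\bar{R}^{asy}$, viewed as a function of the user correlation matrix, depends on $\mathbf{R}_{U_k}$ only through its eigenvalue vector $\boldsymbol{\lambda}_{U_k}$, so the claim can be settled at the level of eigenvalues rather than at the level of matrices.

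Concretely, I would execute the proof in three short steps. First, I would record that after substituting the EVD $\mathbf{R}_{U_k}=\mathbf{U}_{U_k}\mathbf{\Lambda}_{U_k}\mathbf{U}_{U_k}^{\mathrm{H}}$ into (\ref{R_k_asy_corr}) and using the unitary invariance of $\log\det$ (together with the fact that the noise term is a scalar multiple of $\mathbf{I}_L$), the asymptotic sum-rate becomes $\sum_{k,l} f(\lambda_{U_{k_l}})$ with $f(\lambda)=\log_2\bigl[1+(K-1+\zeta_k/(\epsilon\lambda))^{-1}\bigr]$. Second, I would verify that $f$ is concave on $\mathbb{R}_+$ by a direct second-derivative check that exploits $\zeta_k/\epsilon>0$. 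Third, I would apply Proposition~3.C.1 of Marshall and Olkin to conclude that $\bar{R}^{asy}$ is Schur-concave in $\boldsymbol{\lambda}_{U_k}$, and then invoke Definition~\ref{def2}: the hypothesis $\boldsymbol{\lambda}_{U_{k_1}}\succeq\boldsymbol{\lambda}_{U_{k_2}}$ yields $\bar{R}^{asy}(\mathbf{R}_{U_{k_1}})\leq\bar{R}^{asy}(\mathbf{R}_{U_{k_2}})$ directly.

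Since the bulk of this argument is already spelled out in the paragraph immediately preceding the theorem, no substantive obstacle remains; the only routine item is the concavity check for $f$, which is a short sign computation on $f''$. The proof is therefore essentially a packaging exercise, turning the already-derived Schur-concavity statement into the stated inequality via the definition of majorization.
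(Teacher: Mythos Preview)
Your proposal is correct and follows essentially the same approach as the paper: diagonalize $\mathbf{R}_{U_k}$ via its EVD to reduce $\bar{R}^{asy}$ to a separable sum of the concave function $f(\lambda)=\log_2\bigl[1+(K-1+\zeta_k/(\epsilon\lambda))^{-1}\bigr]$, invoke Proposition~3.C.1 of Marshall--Olkin to obtain Schur-concavity, and then read off the inequality from Definitions~\ref{def1}--\ref{def2}. The only minor addition you make is the explicit second-derivative check for concavity, which the paper simply asserts.
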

With Theorem \ref{theorem2}, it can be found that the spatial correlation at the $k$-th user has a negative impact on the deterministic  equivalent of sum-rate.

\subsubsection{Impact of $\mathbf{T}_{I}$ and $\mathbf{R}_{I}$}
From  (\ref{R_k_asy_corr}), both $\mathbf{T}_{I}$ and $\mathbf{R}_{I}$ are in trace operation, i.e., $\mathrm{Tr}(\mathbf{T}_{I}\mathbf{R}_{I})$, and it is hard to investigate the impact of $\mathbf{T}_{I}$ or $\mathbf{R}_{I}$ even when the corresponding matrix ($\mathbf{R}_{I}$ or $\mathbf{T}_{I}$) is deterministic, since the trace function $\mathrm{Tr}(\cdot)$ is both Schur-concave and Schur-convex w.r.t. the eigenvalue vector of $\mathbf{T}_{I}\mathbf{R}_{I}$. For analytical simplicity,  we assume $\mathbf{R}_I=\mathbf{T}_I$ and   $\mathrm{Tr}(\mathbf{T}_{I}\mathbf{R}_{I})=\mathrm{Tr}(\mathbf{T}_{I}^2)$. Evidently from Property 5 in \cite{feng2018impact}, we obtain that $\mathrm{Tr}(\mathbf{T}_{I}^2)$ is Schur-convex w.r.t. $\boldsymbol{\lambda}_{\mathbf{T}_I}$. By using the monotonically increasing property of $\bar{R}^{asy}({\mathbf{T}_I})$ w.r.t.  $\mathrm{Tr}(\mathbf{T}_{I}^2)$, the impact of  reflect-correlated matrix  $\mathbf{T}_{I}$ on the asymptotic sum-rate can be revealed in the following theorem.
      \begin{theorem}\label{theorem3}
If $\mathbf{T}_{I_1}$ is more correlated than $\mathbf{T}_{I_2}$, i.e., $\boldsymbol{\lambda}_{{T}_{I_1}}\succeq\boldsymbol{\lambda}_{{T}_{I_2}}$. We have
\begin{equation}
\bar{R}^{asy}(\mathbf{T}_{I_1})\geq \bar{R}^{asy}(\mathbf{T}_{I_2}).
\end{equation}
\end{theorem}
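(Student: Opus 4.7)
The plan is to exploit the fact that, under the simplifying assumption $\mathbf{R}_I=\mathbf{T}_I$ already imposed in the discussion preceding the statement, the matrix $\mathbf{T}_I$ enters the expression (\ref{R_k_asy_corr}) for $\bar{R}^{asy}$ only through the single scalar $\mathrm{Tr}(\mathbf{T}_I^2)$, appearing in the denominator of the noise term $\zeta_k=\sigma_k^2/(\varphi_k\,\mathrm{Tr}(\mathbf{T}_I^2))$. The argument then decomposes naturally into (i) a monotonicity step showing that $\bar{R}^{asy}$ is increasing in $\mathrm{Tr}(\mathbf{T}_I^2)$, and (ii) a Schur-convexity step showing that $\mathrm{Tr}(\mathbf{T}_I^2)$ is Schur-convex in $\boldsymbol{\lambda}_{\mathbf{T}_I}$. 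Chaining the two with the majorization hypothesis $\boldsymbol{\lambda}_{T_{I_1}}\succeq\boldsymbol{\lambda}_{T_{I_2}}$ delivers the desired inequality.

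For step (i), I would rewrite the bracketed matrix as $\mathbf{I}_L+\epsilon\mathbf{R}_{U_k}\mathbf{A}_k^{-1}$ with $\mathbf{A}_k=(K-1)\epsilon\mathbf{R}_{U_k}+\tfrac{\sigma_k^2}{\varphi_k\mathrm{Tr}(\mathbf{T}_I^2)}\mathbf{I}_L$. Increasing $\mathrm{Tr}(\mathbf{T}_I^2)$ strictly decreases the scalar coefficient of $\mathbf{I}_L$ in $\mathbf{A}_k$, hence $\mathbf{A}_k$ shrinks in the positive-semidefinite (Loewner) order, $\mathbf{A}_k^{-1}$ grows, and therefore $\mathbf{I}_L+\epsilon\mathbf{R}_{U_k}\mathbf{A}_k^{-1}\succeq\mathbf{I}_L$ grows as well. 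Since $\det(\cdot)$ and $\log_2(\cdot)$ are monotone on positive definite matrices and positive reals respectively, each summand, and thus $\bar{R}^{asy}$, is nondecreasing in $\mathrm{Tr}(\mathbf{T}_I^2)$. This part is a routine Loewner-order manipulation and I would keep the write-up short.

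For step (ii), I would quote Property 5 of \cite{feng2018impact} (equivalently, observe that $\mathrm{Tr}(\mathbf{T}_I^2)=\sum_n\lambda_{\mathbf{T}_I,n}^2$ is a sum of a convex symmetric function of the eigenvalues and is therefore Schur-convex in $\boldsymbol{\lambda}_{\mathbf{T}_I}$ by \cite[Prop.~3.C.1]{marshall1979inequalities}). Combining with Definitions \ref{def1} and \ref{def2}, the majorization $\boldsymbol{\lambda}_{T_{I_1}}\succeq\boldsymbol{\lambda}_{T_{I_2}}$ gives $\mathrm{Tr}(\mathbf{T}_{I_1}^2)\geq\mathrm{Tr}(\mathbf{T}_{I_2}^2)$, which together with step (i) yields $\bar{R}^{asy}(\mathbf{T}_{I_1})\geq\bar{R}^{asy}(\mathbf{T}_{I_2})$.

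The proof is really a composition of two elementary facts, so there is no serious obstacle; the only subtle point worth stating explicitly is that the trace constraint $\mathrm{Tr}(\mathbf{T}_I)=N$ is preserved under majorization (both $\mathbf{T}_{I_1}$ and $\mathbf{T}_{I_2}$ have eigenvalue sums equal to $N$), so comparing them through $\mathrm{Tr}(\mathbf{T}_I^2)$ is meaningful and Schur-convexity can indeed be invoked on the normalized simplex of eigenvalues. I would make that observation once in the write-up to justify the logical chain cleanly.
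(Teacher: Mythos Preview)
Your proposal is correct and follows essentially the same approach as the paper: the paper's argument (given in the paragraph immediately preceding the theorem) is exactly your two-step chain, invoking Property~5 of \cite{feng2018impact} for the Schur-convexity of $\mathrm{Tr}(\mathbf{T}_I^2)$ and then the monotonicity of $\bar{R}^{asy}$ in $\mathrm{Tr}(\mathbf{T}_I^2)$. The only minor caution is that in your Loewner-order sketch the step ``$\mathbf{A}_k^{-1}$ grows $\Rightarrow$ $\epsilon\mathbf{R}_{U_k}\mathbf{A}_k^{-1}$ grows'' relies on $\mathbf{R}_{U_k}$ and $\mathbf{A}_k$ commuting (which they do, since $\mathbf{A}_k$ is an affine function of $\mathbf{R}_{U_k}$); making this explicit, or equivalently diagonalizing as in the proof of Theorem~\ref{theorem2}, closes that small gap.
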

Hence, it can be found that the spatial correlation at the RIS has a positive impact on the deterministic equivalent of the sum-rate for $\mathbf{R}_I=\mathbf{T}_I$. This is due to the fact that the RIS can be regarded as a mirror for identical reflection matrix, and higher spatial correlation empowers more power focusing at receiver sides.

With the help of the deterministic equivalent conclusion in (\ref{R_k_asy}), the impacts of the spatial correlations at different sides can be quantified. The spatial correlations at the users and RIS have opposite impacts. More specifically, on the one hand, increasing spatial correlation on the user side reduces the effective dimensionality of the receiver. On the other hand, increasing spatial correlation on the RIS side also enables focusing power. Those results for spatial correlation impacts are consistent with  \cite{tulino2005impact}.
In the next section, we reformulate the optimization problem by using the deterministic approximation of the sum-rate.
\section{Sum-Rate Maximization}\label{new_problem}
Although one joint optimization for the RIS-aided multi-user MIMO system has been proposed in \cite{pan2020multicell}, the high signaling overhead of instantaneous CSI acquisition and computational complexity of alternating optimization algorithm hinder its application in practice. To overcome these issues, we resort to maximize the approximate sum-rate (\ref{R_k_asy}) instead. By comparing with previous works that maximized the sum-rate based on instantaneous CSI \cite{zhou2020intelligent,guo2020weighted,pan2020multicell}, the approximate sum-rate in (\ref{R_k_asy}) only relies on statistical CSI. 
Accordingly, the reformulated problem is given as
\begin{equation} \label{problem}
\begin{aligned}
\mathcal{P}\,:\,&\max\limits_{\mathbf{W},\boldsymbol{\theta}}\quad \sum\limits_{k = 1}^K{\log _2}\det \bigg[ {{\bf{I}}_L}+ \mathbf{\Phi}_k{ \big(\sum\limits_{i = 1,i\neq k}^K\mathbf{\Phi}_i+
{\frac{{\sigma_k^2}}{\varphi_k{\mathrm{Tr}\left(\mathbf{T}_I\mathbf{\Theta}\mathbf{R}_I\mathbf{\Theta}^{\mathrm{H}}\right)}}\mathbf{I}_L} \big)}^{ - 1} \bigg], \\
& \begin{array}{r@{\quad}r@{}l@{\quad}l}
\mathrm{s.t.}
&(\mathrm{C1}):&\mathrm{Tr}(\mathbf{W}\mathbf{W}^\mathrm{H})\leq P_{max},\\
&(\mathrm{C2}):&0\leq\theta_n\leq 2\pi, n=1,\cdots,N.
\end{array}
\end{aligned}
\end{equation}

By observing the objective function in (\ref{problem}), we can find that the transmit beamforming matrix $\mathbf{W}$ and phase shift  vector $\boldsymbol{\theta}$ are only related to $\mathbf{\Phi}_j,j\in\{1,\cdots,K\}$ and ${\mathrm{Tr}\left(\mathbf{T}_I\mathbf{\Theta}\mathbf{R}_I\mathbf{\Theta}^\mathrm{H}\right)}$, respectively. Besides, (C1) and (C2) are the constraints for two optimization variables, respectively. Unlike the previous works \cite{wu2019intelligent,wu2020beamforming,zhou2020intelligent,
guo2020weighted,pan2020multicell,zhang2020sum,ye2020joint,
pan2020intelligent,zheng2020intelligent,mu2020exploiting,
zhao2020performance,wang2020intelligent,li2020reconfigurable,
zhu2020Stochastic,han2019large,zhang2020transmitter,wang2021joint,
chaaban2020opportunistic,kammoun2020asymptotic,alwazani2020intelligent}, it can be found that the designs of transmit beamforming matrix $\mathbf{W}$ and RIS phase shift vector $\boldsymbol{\theta}$ can be decoupled in the reformulated problem. In what follows, the optimizations of $\mathbf{W}$ and $\boldsymbol{\theta}$ are tackled one by one.
\subsection{Optimization of Phase Shifts} \label{theta_design}
Noticing that $\boldsymbol{\theta}$ is only related to the trace term ${\mathrm{Tr}\left(\mathbf{T}_I\mathbf{\Theta}\mathbf{R}_I\mathbf{\Theta}^\mathrm{H}\right)}$ in the denominator of the inverse matrix in (\ref{problem}), and the objective function is monotonically increasing w.r.t. ${\mathrm{Tr}\left(\mathbf{T}_I\mathbf{\Theta}\mathbf{R}_I\mathbf{\Theta}^\mathrm{H}\right)}$, the optimal phase shift design can be transformed to the maximization of the trace term ${\mathrm{Tr}\left(\mathbf{T}_I\mathbf{\Theta}\mathbf{R}_I\mathbf{\Theta}^\mathrm{H}\right)}$ under the constraint of (C2). Then the corresponding optimization problem w.r.t. $\boldsymbol{\theta}$ can be casted as
\begin{equation}
\begin{aligned}
 \label{P_1}
\mathcal{P}1\,:\,&\max\limits_{\boldsymbol{\theta}}\quad {\mathrm{Tr}\left(\mathbf{T}_I\mathbf{\Theta}\mathbf{R}_I\mathbf{\Theta}^\mathrm{H}\right)}, \\
& \begin{array}{r@{\quad}r@{}l@{\quad}l}
\mathrm{s.t.}
&(\mathrm{C2}):&\,0\leq\theta_n\leq 2\pi, n=1,\cdots,N.
\end{array}
\end{aligned}
\end{equation}
Let $\mathbf{v}=[e^{j\theta_1},\cdots,e^{j\theta_n},\cdots,e^{j\theta_N}]^{\mathrm{T}}$, together with the matrix identity of \cite[Lemma 7.5.2]{horn2012matrix},   problem (\ref{P_1}) can be reduced to
\begin{equation}\label{SDP}
\begin{aligned}
\mathcal{P}1\,:\,&\max\limits_{\mathbf{V}}\quad \mathrm{tr}\left(\mathbf{\Xi}\mathbf{V}\right), \\
& \begin{array}{r@{\quad}r@{}l@{\quad}l}
\mathrm{s.t.}
&&\mathbf{V}_{n,n}=1, n=1,\cdots,N,\\
&&\mathbf{V}\succeq0 \& \mathrm{rank}(\mathbf{V})=1,
\end{array}
\end{aligned}
\end{equation}
where $\mathbf{\Xi}=\mathbf{T}_I\odot\mathbf{R}_I^{\mathrm{T}}$, $\mathbf{V}=\mathbf{v}\mathbf{v}^\mathrm{H}$ and $\mathbf{T}_I\odot\mathbf{R}_I^{\mathrm{T}}$ denotes the Hadamard product of $\mathbf{T}_I$ and $\mathbf{R}_I$. Since both $\mathbf{T}_I$ and $\mathbf{R}_I$ are positive semi-definite Hermitian matrices, $\mathbf{\Xi}$ is also a positive semi-definite Hermitian matrix \cite[P477]{horn2012matrix}.

It is well known that (\ref{SDP}) is a NP-hard quadratically constrained quadratic programs(QCQP) problem due to the rank-one constraint. There are some popular tools that can deal with this problem, such as semi-definite relaxation (SDR) algorithm\cite{wu2019intelligent}, majorization-minimization (MM) algorithm \cite{pan2020multicell},  the iterative
algorithm for continuous phase case (IA-CPC) \cite{cui2017quadratic}, wideband beampattern formation via iterative techniques (WBFIT) \cite{he2010wideband}, etc.
Although, the  SDR algorithm has been widely adopted to solve the  NP-hard QCQP problem with rank one constraint, it relies on the convex optimization solvers such as CVX, and requires a  sufficiently large number of randomizations to guarantee the approximation with high computational complexity \cite{so2007approximating}.

In order to reduce the computational complexity for large phase shift elements, we apply the PGA algorithm to obtain a sub-optimal solution  $\boldsymbol{\theta}^\star$ in this work. The main idea of PGA algorithm is similar to the concept of gradient descent technique by employing gradient ascent to monotonically increase the objective function of (\ref{P_1}). However, at each iteration, the solution should  project onto the closest feasible point that satisfies the unit-modulus constraint.  The main steps of PGA algorithm in each iteration $(t)$ is given as following.

Firstly, we have to compute the gradient ascent direction. By taking the first-order derivative of $\varphi=\mathrm{Tr}\left(\mathbf{T}_I\mathbf{\Theta}\mathbf{R}_I\mathbf{\Theta}^\mathrm{H}\right)$ w.r.t. $v_n=e^{j\theta_n}$, we have
\begin{equation}\label{gradient}
\frac{\partial \varphi}{\partial v_n}=\mathrm{Tr}(\mathbf{T}_I(\mathbf{E}_{nn}\mathbf{R}_I\mathbf{\Theta}^\mathrm{H}-v_n^{-1}\mathbf{\Theta}\mathbf{R}_I\mathbf{E}_{nn})),
\end{equation}
where $\mathbf{E}_{nn}$ denotes the all-zero $N\times N$ matrix except that the $(n,n)$-th element is 1.

 We denote $\mathbf{v}^{(t)}=[v_1^{(t)},\cdots,v_N^{(t)}]$, where $|v_n^{(t)}|=1,n\in\{1,\cdots,N\}$. Then, the new  vector $\tilde{\mathbf{v}}^{(t+1)}$ is updated by setting $\mathbf{p}^{(t)}=[\frac{\partial \varphi}{\partial v_1},\cdots,\frac{\partial \varphi}{\partial v_N}]$ as ascent direction,
\begin{equation}
\tilde{\mathbf{v}}^{(t+1)}=\mathbf{v}^{(t)}+\alpha \mathbf{p}^{(t)},
\end{equation}
 where $\alpha$ is the step size of gradient ascent.

 It is obviously found
that the update vector $\tilde{\mathbf{v}}^{(t+1)}$ does not satisfy the unit-modulus constraint in (\ref{SDP}).  Hence, it has to be projected onto the closest feasible point that satisfies the constraint, and the corresponding phase shift vector is given as
 \begin{equation}
\mathbf{v}^{(t+1)}= \mathrm{exp}(j\mathrm{arg}(\tilde{\mathbf{v}}^{(t+1)})).
\end{equation}
  Since gradient ascent is along the monotonically increasing direction of $\varphi$ at each iteration and $\mathrm{tr}\left(\mathbf{\Xi}\mathbf{v}^{(t)}{\mathbf{v}^{(t)}}^{\mathrm{H}}\right)$ is upper bounded, the convergence of the PGA algorithm is therefore guaranteed.  When the algorithm converges, we can obtain the sub-optimal phase shift vector as $\boldsymbol{\theta}^\star=\mathrm{arg}(\mathbf{v}^{(t)})$. It is worth emphasizing that global optimality of the phase shifts cannot be guaranteed since the problem (\ref{P_1}) is not convex w.r.t. $\boldsymbol{\theta}$.
\subsection{Optimization of Transmit beamforming Matrix $\mathbf{W}$}\label{Beamforming_design}
Next, the transmit beamforming matrix $\mathbf{W}$ at the BS will be optimally designed to maximize the approximate sum-rate of RIS-aided multi-user systems. Since the phase shifts are independent of $\mathbf{W}$ and have been designed in Section \ref{theta_design}, the optimization problem for $\mathbf{W}$ can be simplified as
\begin{equation}
\begin{aligned} \label{P_bar}
\mathcal{P}2\,:\,&\max\limits_{\mathbf{W}}\quad \sum\limits_{k = 1}^K {\log _2}\det\bigg[ {{\bf{I}}_L} + \mathrm{Tr}\big(\mathbf{T}_B {\mathbf{W}_k}{\mathbf{W}_k^\mathrm{H}}\big) \tilde{\mathbf{R}}_{U_{k}}\times{\bigg(\sum\limits_{i = 1,i\neq k}^K\mathrm{Tr}\big(\mathbf{T}_B {\mathbf{W}_i}{\mathbf{W}_i^\mathrm{H}}\big) \tilde{\mathbf{R}}_{U_{k}}+\epsilon_k \mathbf{I}_L\bigg)}^{ - 1} \bigg], \\
& \begin{array}{r@{\quad}r@{}l@{\quad}l}
s.t.
&(\mathrm{C1}):&\mathrm{Tr}(\mathbf{W}\mathbf{W}^\mathrm{H})\leq P_{max},\\
\end{array}
\end{aligned}
\end{equation}
where  $\tilde{\mathbf{R}}_{U_{k}}={\mathbf{R}}_{U_{k}}/N$ and $\epsilon_k=\frac{{\sigma_k^2}}{\varphi_k{\mathrm{Tr}\left(\mathbf{T}_I\mathbf{\Theta}^\star\mathbf{R}_I{\mathbf{\Theta}^\star}^\mathrm{H}\right)}}$ is deterministic for a given optimal phase shift vector  $\boldsymbol{\theta}^\star$, as shown in Section \ref{theta_design}.

This problem is also non-convex since the source transmit beamformer $\mathbf{W}$ appears
in both the numerator and the denominator of the approximate sum-rate.  In the following we provide a   globally optimal solution  for transmit beamforming matrix by using the water-filling technique and  Lagrangian multiplier method.
At first, by introducing auxiliary variables $\bar{P}$ and $\boldsymbol{\eta}=[\eta_1,\cdots,\eta_K]^\mathrm{T},\,(0\leq \eta_k\leq1, \sum\limits_{k = 1}^K\eta_k=1)$, where $\bar{P}={\mathrm{Tr}}(\mathbf{T}_B\mathbf{W}\mathbf{W}^\mathrm{H})$, and $\boldsymbol{\eta}$ denotes the power weight vector for each user precoder after multiplying by the BS antenna correlation matrix $\mathbf{T}_B$, i.e., $\eta_k\bar{P}=\mathrm{Tr}\left(\mathbf{T}_B {\mathbf{W}_k}{\mathbf{W}_k^\mathrm{H}}\right)$, the optimization problem $\mathcal{P}_2$ can be reformulated as
\begin{equation} \label{P_2}
\begin{aligned}
&\max\limits_{\bar{P},\boldsymbol{\eta},\mathbf{W}}\quad \sum\limits_{k = 1}^K {\log _2}\det\big[ {{\bf{I}}_L} + \eta_k\bar{P} \tilde{\mathbf{R}}_{U_{k}}{\big((1-\eta_k)\bar{P} \tilde{\mathbf{R}}_{U_{k}}+\epsilon_k \mathbf{I}_L\big)}^{ - 1} \big], \\
& \begin{array}{r@{\quad}r@{}l@{\quad}l}
s.t.
&(\mathrm{C1}):&\mathrm{Tr}(\mathbf{W}\mathbf{W}^\mathrm{H})\leq P_{max},\\
&(\mathrm{C3}):&\,\,\bar{P}={\mathrm{Tr}}(\mathbf{T}_B\mathbf{W}\mathbf{W}^\mathrm{H}),\\
&(\mathrm{C4}):&\sum\limits_{k = 1}^K\eta_k=1, \eta_k\geq0.\\
\end{array}
\end{aligned}
\end{equation}
The objective function in (\ref{P_2}) can be rewritten as
\begin{equation}\label{f_bar_P}
\begin{aligned}
f(\bar{P},\boldsymbol{\eta})&=\sum\limits_{k = 1}^K {\log _2}\det\big[\big(\epsilon_k{\bf{I}}_L+ \bar{P}\tilde{\mathbf{R}}_{U_{k}}\big)\times{\big[(1-{\eta_k})\bar{P} \tilde{\mathbf{R}}_{U_{k}}+\epsilon_k{\bf{I}}_L\big]}^{ - 1} \big]\\
&=\sum\limits_{k = 1}^K\sum\limits_{l = 1}^L  {\log _2}\frac{\epsilon_k+r_{k,l}\bar{P}}{\epsilon_k+(1-\eta_k)r_{k,l}\bar{P}},
\end{aligned}
\end{equation}
where $r_{k,l}=\lambda_{U_{k_l}}/N$ is the $l$-th eigenvalue of matrix $\tilde{\mathbf{R}}_{U_k}$.
It is readily found from (\ref{f_bar_P}) that $f(\bar{P},\boldsymbol{\eta})$
is a monotonically increasing function of $\bar{P}$, because the first derivative of ${\log _2}\frac{\epsilon_k+r_{k,l}\bar{P}}{\epsilon_k+(1-\eta_k)r_{k,l}\bar{P}}$ w.r.t. $\bar{P}$ is always greater than $0$. Therefore, $\mathbf{W}$ should be optimally chosen to maximize $\bar{P}$, such that
\begin{equation}
\begin{aligned}
 \label{P_3}
\mathcal{P}3\,:\,&\max\limits_{\mathbf{W}}\quad {\mathrm{Tr}}(\mathbf{T}_B\mathbf{W}\mathbf{W}^\mathrm{H}), \\
& \begin{array}{r@{\quad}r@{}l@{\quad}l}
s.t.
&(\mathrm{C1}):&\mathrm{Tr}(\mathbf{W}\mathbf{W}^\mathrm{H})\leq P_{max},\\
&(\mathrm{C5}):&\mathrm{Rank}(\mathbf{W})= Kd.

\end{array}
\end{aligned}
\end{equation}
where the constraint (C5) denotes the $Kd$ data steams at BS.
It is easily  observed that the above problem is equivalent to  Rayleigh quotient problem \cite{gong2018secure}. As such, we present the optimal structure of $\mathbf{W}$ to maximize $\bar{P}$ in the following  theorem.
 \begin{theorem}
 Assuming that the EVD of $\mathbf{T}_B$ is $\mathbf{T}_B=\mathbf{U}_{\mathbf{T}}\mathbf{\Sigma}_{\mathbf{T}}\mathbf{U}_{\mathbf{T}}^{\mathrm{H}}$, where $\{\mathbf{U}_\mathbf{T},\mathbf{\Sigma}_\mathbf{T}\}\in\mathbb{C}^{M\times M}$, $\mathbf{\Sigma}_{\mathbf{T}}$  denotes a  diagonal matrix and its diagonal elements
 are the eigenvalues of $\mathbf{T}_B$ with descending ordered, i.e., $\mathbf{\Sigma}_{\mathbf{T}}(1,1)\geq\mathbf{\Sigma}_{\mathbf{T}}(2,2)\geq\mathbf{\Sigma}_{\mathbf{T}}(M,M)$.   The singular value decomposition (SVD) of $\mathbf{W}$ is $\mathbf{W}=\mathbf{U}_\mathbf{W}\mathbf{\Sigma}_\mathbf{W}^{\frac{1}{2}}\mathbf{V}_\mathbf{W}^{\mathrm{H}}$, where $\mathbf{U}_\mathbf{W}\in\mathbb{C}^{M\times M}$, $\mathbf{\Sigma}_\mathbf{W}^{\frac{1}{2}}\in\mathbb{C}^{M\times Kd}$,
 $\mathbf{V}_\mathbf{W}^{\mathrm{H}}\in\mathbb{C}^{Kd\times Kd}$. We can obtain that the optimal solution for $\mathbf{W}$ is the one which satisfies
 \begin{equation}\label{W_ast}
 \mathbf{W}^\star=\mathbf{U}_{\mathbf{T}}\mathbf{\Sigma}_{\mathbf{W}^\star}^{\frac{1}{2}}\mathbf{V}_\mathbf{W}^{\mathrm{H}}, \end{equation}
 where $\mathbf{\Sigma}_{\mathbf{W}^\star}$ denotes the all-zero $M\times Kd$ matrix except that the entry of the $(1,1)$-th element is $P_{max}$, $\mathbf{V}_\mathbf{W}^{\mathrm{H}}$ is the arbitrary unitary matrix.
 \end{theorem}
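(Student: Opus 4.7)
The plan is to transform problem $\mathcal{P}3$ into a weighted row-norm allocation problem over the eigenbasis of $\mathbf{T}_B$. First I would substitute the EVD $\mathbf{T}_B = \mathbf{U}_\mathbf{T}\mathbf{\Sigma}_\mathbf{T}\mathbf{U}_\mathbf{T}^\mathrm{H}$ and introduce the unitary change of variable $\tilde{\mathbf{W}} = \mathbf{U}_\mathbf{T}^\mathrm{H}\mathbf{W}$, which leaves the power budget invariant because $\mathrm{Tr}(\mathbf{W}\mathbf{W}^\mathrm{H}) = \mathrm{Tr}(\tilde{\mathbf{W}}\tilde{\mathbf{W}}^\mathrm{H})$. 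Using the cyclic property of the trace, the objective becomes $\mathrm{Tr}(\mathbf{\Sigma}_\mathbf{T}\tilde{\mathbf{W}}\tilde{\mathbf{W}}^\mathrm{H}) = \sum_{m=1}^{M}\mathbf{\Sigma}_\mathbf{T}(m,m)\,\|\tilde{\mathbf{w}}_m\|^2$, where $\tilde{\mathbf{w}}_m$ denotes the $m$-th row of $\tilde{\mathbf{W}}$.

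Next I would solve this scalar reformulation directly. Since the diagonal entries of $\mathbf{\Sigma}_\mathbf{T}$ are in descending order and the row-norms must satisfy $\sum_m \|\tilde{\mathbf{w}}_m\|^2 \le P_{max}$, the weighted sum is maximized by concentrating the entire budget on the row associated with the largest eigenvalue, i.e., $\|\tilde{\mathbf{w}}_1\|^2 = P_{max}$ and $\tilde{\mathbf{w}}_m = \mathbf{0}$ for $m \ge 2$, giving optimal value $\mathbf{\Sigma}_\mathbf{T}(1,1)\,P_{max}$. Back-transforming via $\mathbf{W}^\star = \mathbf{U}_\mathbf{T}\tilde{\mathbf{W}}^\star$ and matching against the SVD parameterization $\mathbf{W} = \mathbf{U}_\mathbf{W}\mathbf{\Sigma}_\mathbf{W}^{1/2}\mathbf{V}_\mathbf{W}^\mathrm{H}$ yields $\mathbf{U}_\mathbf{W} = \mathbf{U}_\mathbf{T}$ and $\mathbf{\Sigma}_{\mathbf{W}^\star}$ with a single nonzero entry equal to $P_{max}$ at position $(1,1)$, exactly as claimed. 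Because the objective depends on $\mathbf{W}$ only through $\mathbf{W}\mathbf{W}^\mathrm{H}$, the right-unitary factor $\mathbf{V}_\mathbf{W}^\mathrm{H}$ may be chosen arbitrarily.

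The Lagrangian method mentioned in the text provides an equivalent derivation: writing $L(\mathbf{W},\mu) = \mathrm{Tr}(\mathbf{T}_B\mathbf{W}\mathbf{W}^\mathrm{H}) - \mu\bigl(\mathrm{Tr}(\mathbf{W}\mathbf{W}^\mathrm{H}) - P_{max}\bigr)$ and setting $\partial L/\partial \mathbf{W}^\ast = \mathbf{0}$ gives the stationarity condition $\mathbf{T}_B\mathbf{W} = \mu\mathbf{W}$, so the columns of $\mathbf{W}^\star$ are eigenvectors of $\mathbf{T}_B$ and the power constraint must be active; choosing $\mu = \mathbf{\Sigma}_\mathbf{T}(1,1)$ selects the principal eigenvector and recovers the same structure. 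The main subtlety I would need to treat carefully is the apparent tension with constraint (C5): the resulting $\mathbf{W}^\star$ has rank one, not $Kd$. I would reconcile this by interpreting (C5) as specifying the number of data streams at the BS rather than a hard rank constraint on this inner maximization, noting that the rank-one principal direction is subsequently split across users through the power-weight vector $\boldsymbol{\eta}$ together with the free unitary factor $\mathbf{V}_\mathbf{W}^\mathrm{H}$ that distributes the budget among the $K$ sub-blocks $\{\mathbf{W}_k\}$. I would flag this explicitly so the reader does not infer infeasibility.
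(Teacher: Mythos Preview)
Your proposal is correct and, in fact, more complete than the paper's own treatment. The paper does not supply a proof: it simply remarks that problem $\mathcal{P}3$ ``is equivalent to [a] Rayleigh quotient problem'' and cites an external reference, then states the theorem without further justification. Your direct derivation via the unitary change of variable $\tilde{\mathbf{W}}=\mathbf{U}_\mathbf{T}^\mathrm{H}\mathbf{W}$ and the weighted row-norm allocation argument is an elementary, self-contained version of the same Rayleigh-quotient reasoning, and it reaches the identical conclusion $\bar P^\star = \mathbf{\Sigma}_\mathbf{T}(1,1)\,P_{max}$. The Lagrangian alternative you sketch is likewise valid and yields the same eigenvector structure.

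Your explicit discussion of constraint (C5) is a genuine addition: the paper introduces $\mathrm{Rank}(\mathbf{W})=Kd$ but then produces a rank-one optimizer without comment. Your interpretation---that (C5) records the number of data streams rather than imposing a hard rank constraint on the inner maximization, with the splitting across users handled downstream by $\boldsymbol{\eta}$ and the free right-unitary $\mathbf{V}_\mathbf{W}^\mathrm{H}$---is exactly how the paper subsequently uses the result (cf.\ the per-user form $\mathbf{W}_k^\star=\mathbf{U}_\mathbf{T}\mathbf{\Sigma}_{\mathbf{W}_k^\star}^{1/2}\mathbf{V}_{\mathbf{W}_k}^\mathrm{H}$). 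Flagging this tension is appropriate; the paper simply leaves it implicit.
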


By substituting (\ref{W_ast}) into  (\ref{P_3}), we can obtain the maximum $\bar{P}$ as $\bar{P}^\star=t_{max}P_{max}$, where $t_{max}$ is the maximum eigenvalue of $\mathbf{T}_B$.

We then put our focus on optimizing the power weight vector $\boldsymbol{\eta}$. By substituting $\bar{P}^\star$ into (\ref{P_2}) and ignoring the constant terms, the asymptotic sum-rate maximization w.r.t. $\boldsymbol{\eta}$ is reformulated as
\begin{equation}\label{beta}
\begin{aligned}
\,&\min\limits_{\boldsymbol{\eta}}\sum\limits_{k = 1}^K {\ln}\det\left[\epsilon_k{\bf{I}}_L+(1-{\eta_k})\bar{P}^\star \tilde{\mathbf{R}}_{U_{k}}\right], \\
& \begin{array}{r@{\quad}r@{}l@{\quad}l}
s.t.
&(\mathrm{C4}):&\sum\limits_{k = 1}^K\eta_k=1, \eta_k\geq0.\\
\end{array}
\end{aligned}
\end{equation}
We can find that (\ref{beta}) is a water-filling problem, and the corresponding Lagrangian function is given by
\begin{equation}
\mathcal{L}(\boldsymbol{\eta},\tau)=\sum\limits_{k = 1}^K \sum\limits_{l = 1}^L{\ln}{\left[\epsilon_k+(1-{\eta_k}) r_{k,l}\bar{P}^\star\right]}+\tau\big(\sum\limits_{k = 1}^K\eta_k-1\big),
\end{equation}
where $\tau$ is the lagrange multipliers associated with constraint $\sum\limits_{k = 1}^K\eta_k=1$.  Due to complementary slackness, the optimal lagrange multipliers are zeros in constraints $\eta_k\geq0$.

 By taking the derivative of $\mathcal{L}(\boldsymbol{\eta},\tau)$ w.r.t. $\eta_k$, we have
 \begin{equation}\label{first_order}
\frac{\partial \mathcal{L}(\boldsymbol{\eta},\tau)}{\partial \eta_k}=\tau-\sum\limits_{l = 1}^L\frac{1}{1-{\eta_k}+\epsilon_k/( r_{k,l}\bar{P}^\star)}).
 \end{equation}

Note that by setting $\frac{\partial \mathcal{L}(\boldsymbol{\eta},\tau)}{\partial \eta_k}=0$, we can obtain the optimal solution of $\boldsymbol{\eta}$. However, according to the Galois theory \cite{cox2011galois}, there is no closed-from solution for the equation $\frac{\partial \mathcal{L}(\boldsymbol{\eta},\tau)}{\partial \eta_k}=0$ w.r.t. $\eta_k$ if $L\geq5$. 
To deal with this issue, we have one proposition for the optimal solution $\boldsymbol{\eta}^*$ as follows.
\begin{Proposition}
By setting (\ref{first_order}) equal to zero, the optimal solution $\boldsymbol{\eta}^\star$  can be obtained by using a nested bisection method, with which the globally optimal solution can be obtained.
\end{Proposition}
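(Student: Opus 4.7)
My plan is to split the proposition into three assertions about the stationarity system derived from~(\ref{first_order}). Letting $g_k(\eta_k) := \sum_{l=1}^{L}\bigl[(1-\eta_k) + \epsilon_k/(r_{k,l}\bar{P}^\star)\bigr]^{-1}$, the KKT conditions for~(\ref{beta}) reduce to $\tau = g_k(\eta_k)$ for $k=1,\dots,K$, coupled with $\sum_{k=1}^K \eta_k = 1$. I would prove: (i) for each fixed $\tau$ the per-user equation $g_k(\eta_k)=\tau$ has a unique root $\eta_k(\tau)$ in the feasible interval $[0,1)$, which the inner bisection locates; (ii) the aggregated map $S(\tau) := \sum_{k=1}^K \eta_k(\tau)$ is continuous and strictly monotone, so a single $\tau^\star$ with $S(\tau^\star) = 1$ exists and is identified by the outer bisection; (iii) the resulting stationary point is globally optimal for~(\ref{beta}).

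For (i), each summand of $g_k$ has the form $[(1-\eta_k) + c_{k,l}]^{-1}$ with strictly positive offset $c_{k,l} := \epsilon_k/(r_{k,l}\bar{P}^\star) > 0$, so $g_k$ is continuously differentiable and strictly increasing on $[0,1)$, with $g_k(0) = \sum_l (1 + c_{k,l})^{-1} \in (0,L)$ and $g_k(\eta_k) \to +\infty$ as $\eta_k \uparrow 1$. By the intermediate value theorem, whenever $\tau > g_k(0)$ a unique root exists, to which the inner bisection converges at a geometric rate. For (ii), strict monotonicity of $g_k$ transfers to its inverse $\eta_k(\tau) = g_k^{-1}(\tau)$, so $S(\tau)$ is strictly monotone; running the outer bisection on a bracket $[\tau_L,\tau_U]$ with $S(\tau_L) < 1 < S(\tau_U)$ thus isolates $\tau^\star$ unambiguously. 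The inequality constraints $\eta_k \geq 0$ are treated by an outer active-set loop: if the inner solve returns an inadmissible $\eta_k^\star < 0$ for some $k$, we fix $\eta_k = 0$, drop that user from the Lagrangian, and re-run the bisection on the reduced index set, as is standard in water-filling arguments.

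Step (iii) is where the main difficulty lies. The objective of~(\ref{beta}) is $\sum_k \ln\det\bigl[\epsilon_k \mathbf{I}_L + (1-\eta_k)\bar{P}^\star \tilde{\mathbf{R}}_{U_k}\bigr]$, whose per-user term is concave in $\eta_k$ (since $\ln\det$ of a positive affine matrix-valued function is concave); consequently, minimizing it over the simplex is not automatically a convex program, and stationarity alone does not immediately imply global optimality. To close the argument, I would either verify that the unique interior KKT point returned by the nested bisection dominates all vertex candidates of the simplex through a direct comparison of objective values, or invoke separable-concave-minimization arguments showing that, under the structural hypotheses tacitly used in the paper, the interior stationary point is the global minimizer. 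This verification — not the bisection mechanics in (i) and (ii) — is where the bulk of the work sits.
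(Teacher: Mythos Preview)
Your steps (i) and (ii) reproduce, with somewhat more care, the paper's own argument: the paper also relies on the monotonicity of $\partial\mathcal{L}/\partial\eta_k$ in $\eta_k$ and in $\tau$ to justify the two nested bisections, and supplies explicit brackets $[\tau^{lp},\tau^{up}]$ for the outer search. On those points you are aligned with the paper and slightly more rigorous (you invoke the intermediate value theorem and sketch an active-set treatment of $\eta_k\ge 0$, whereas the paper simply asserts that the inequality multipliers vanish by complementary slackness).

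Where you diverge is step~(iii). The paper's entire justification of global optimality is the single sentence that, because $\partial\mathcal{L}/\partial\eta_k$ is monotone in $\eta_k$ and in $\tau$, ``the nested bisection method converges to the unique globally optimal solution.'' In other words, the paper infers global optimality directly from uniqueness of the stationary point and never confronts the concavity issue you raise. Your observation is well founded: each term $\ln[\epsilon_k+(1-\eta_k)r_{k,l}\bar P^\star]$ is concave in $\eta_k$, so problem~(\ref{beta}) is a separable concave minimization over the simplex, and the unique interior KKT point is, in general, a local \emph{maximizer} of that objective rather than a minimizer. The additional verification you call for in~(iii) --- comparing against vertex candidates or invoking a structural argument --- is simply absent from the paper. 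So the gap you flag is real, and it is shared by the paper's proof; your proposal is not missing anything relative to the paper, but rather identifying a weakness the paper glosses over.
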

\begin{proof}
Since $\eta_k\in [0,\,1]$, we can easily obtain the upper and lower bounds
of $\tau$, which are respectively given by
 \begin{equation}\label{tau_up}
  \tau<\max\limits_{k\in{1,\cdots,K}}\sum\limits_{l = 1}^L\frac{1}{\epsilon_k/( r_{k,l}\bar{P}^\star)}\triangleq\tau^{up},
  \end{equation}
  and
    \begin{equation}\label{tau_lp}
\tau^{lp}\triangleq\min\limits_{k\in{1,\cdots,K}}\sum\limits_{l = 1}^L\frac{1}{1+\epsilon_k/(r_{k,l}\bar{P}^\star )}< \tau.
  \end{equation}
Given an initial value $\tau^{(0)}$, we can always obtain a solutions for each $\eta^{(i)}_k$ by using nested bisection method, where $i$ is the number of the iterations and $k\in\{1,\cdots,K\}$. We set $\tau^{ub}=\tau^{i}$ if $\sum\nolimits_{k =1}^K\eta^{(i)}_k<1$, and $\tau^{lb}=\tau^{i}$ otherwise. Then $\tau^{(i)}$ is updated as $\tau^{(i)}=\frac{\tau^{lp}+\tau^{up}}{2}$. With the updated $\tau$, $\eta^{(i+1)}_k$ is recalculated. This process is repeated until $\sum\nolimits_{k =1}^K\eta^{(i+1)}_k-1$ less than the error tolerance $\varepsilon$. Since the first-order derivative function $\frac{\partial \mathcal{L}(\boldsymbol{\eta},\tau)}{\partial \eta_k}$ is a monotonically decreasing and increasing w.r.t. $\eta_k$ and $\tau$, respectively, we can guarantee that the nested bisection method converges to the unique globally optimal solution. The nested bisection method is summarized in Algorithm \ref{alg} in next subsection.

\end{proof}

Based on the optimal solution $\bar{P}^\star=t_{max}P_{max}$ and the optimal power weight vector $\boldsymbol{\eta}^\star$, we can obtain the optimal solution for $\mathbf{W}_k, k\in[1,\cdots,K]$ as
  \begin{equation}\label{opt_W_k}
 \mathbf{W}_k^\star=\mathbf{U}_{\mathbf{T}}\mathbf{\Sigma}_{\mathbf{W}_k^\star}^{\frac{1}{2}}\mathbf{V}_{\mathbf{W}_k}^{\mathrm{H}},
  \end{equation}
  where $\mathbf{\Sigma}_{\mathbf{W}_k^\star}$ denotes the all-zero $M\times d$ matrix except that the $(1,1)$-th element is $\eta_k^\star P_{max}$, $\mathbf{V}_{\mathbf{W}_k}\in\mathcal{C}^{d\times d}$ is an arbitrary unitary matrix.

  Finally, the optimal design method for $\mathbf{W}_k$ can be summarized as follows. One one hand, the left-singular vectors of $\mathbf{W}_k$ should be consistent with eigenvectors of $\mathbf{T}_B$, and the maximum singular value of   $\mathbf{W}_k$ corresponds to the dominant eigenvalue of $\mathbf{T}_B$ to maximize ${\mathrm{Tr}}(\mathbf{T}_B\mathbf{W}\mathbf{W}^\mathrm{H})$. On the other hand, the power allocation among $\{\mathbf{W}_1,\cdots,\mathbf{W}_K\}$ for each user should be designed on the basis of the statistical CSI $\mathbf{R}_{U_k}$, including path-loss for cascaded channel, the second-order statistics of small-scale fading for each channel, and the spatial correlation matrix at each user, which  can be solved by applying the bisection search.

\subsection{Summary and Disscussion}
In Sections \ref{theta_design} and \ref{Beamforming_design}, we solve the optimization problem by decoupling it into two sub-problems for the RIS-aided multi-user MIMO system. Clearly from (\ref{opt_W_k}), the direction of the optimal transmit beamforming $\mathbf{W}_k$ is designed to be aligned with the transmit correlation $\mathbf{T}_B$ at the BS, while the power allocation for each user is determined by the receive correlation at the user itself. Meanwhile, it is shown in (\ref{P_1}) that the phase shifts $\boldsymbol{\theta}$ are designed purely by the spatial correlations at the RIS. The pseudo-code of the joint optimization of the precoding matrix $\mathbf{W}$ and RIS phase shift vector $\boldsymbol{\theta}$ is provided in Algorithm \ref{alg}.

\begin{algorithm}
  \caption{: Joint optimization of the precoding matrix $\mathbf{W}$ and RIS phase shift vector $\boldsymbol{\theta}$
  }\label{alg}
\begin{algorithmic}[1] 
\STATE Initalize        \begin{itemize}
                   \item The feasible solution $\boldsymbol{\theta}^{(1)}$;
                   \item The iteration number $t=1$, and the accuracy $\varepsilon$;
                 \end{itemize}
\STATE Calculate  the value of the OF in (\ref{SDP}) as $\varphi(\mathbf{v}^{(t)})$;
\REPEAT
\STATE Calculate  the gradient $\mathbf{p}^{(t)}=[\frac{\partial \varphi}{\partial v_1},\cdots,\frac{\partial \varphi}{\partial v_N}]$, where $\frac{\partial \varphi}{\partial v_n}$ is given in (\ref{gradient});
\STATE Calculate $\tilde{\mathbf{v}}^{(t+1)}=\mathbf{v}^{(t)}+\alpha \mathbf{p}^{(t)}$ for gradient  ascent direction;
\STATE Update $\mathbf{v}^{(t+1)}= \mathrm{exp}(j\mathrm{arg}(\mathbf{v}^{(t+1)}))$;
\STATE Calculate  the value of the OF in (\ref{SDP}) as $\varphi(\mathbf{v}^{(t+1)})$;
\UNTIL{$|\varphi(\mathbf{v}^{(t+1)})-\varphi(\mathbf{v}^{(t)})|\leq \varepsilon$};

\STATE Calculate $\mathbf{U}_{\mathbf{T}}$ according to eigenvalue descending order, given $\mathbf{T}_B$;
\STATE Initalize        \begin{itemize}
                   \item The upper and lower bounds on $\tau$, $\tau^{ub}$ in (\ref{tau_up}) and $\tau^{lb}$ in (\ref{tau_lp});
                   \item The upper and lower bounds on $\eta_k$, $\eta_k^{lb}=0$ and $\eta_k^{ub}=1$ ($k\in \{0,\cdots,K\}$);
                   \item The user index $k=0$;
                 \end{itemize}
\REPEAT
\STATE $\tau=(\tau^{lb}+\tau^{lb})/2$;
\REPEAT
\STATE $k=k+1$;
\STATE Calculate $\eta_k^\star$ with  bisection search method basing on (\ref{first_order});
\UNTIL{$k=K$}
\STATE \textbf{if} $\sum^K_{k=1}\eta_k-1\leq0$ \textbf{then} $\tau^{ub}=\tau$;
\STATE \textbf{else} $\tau^{lb}=\tau$;
\UNTIL{$|\sum\limits_{k = 1}^K\eta_k-1|\leq \varepsilon$};
\STATE Calculate $\mathbf{W}_k^\star=\mathbf{U}_{\mathbf{T}}\mathbf{\Sigma}_{\mathbf{W}_k^\star}^{\frac{1}{2}}\mathbf{V}_{\mathbf{W}_k}^{\mathrm{H}}$, $\mathbf{\Sigma}_{\mathbf{W}_k^\star}{(1,1)}=\eta_k^\star P_{max;}$
\ENSURE The optimal phase shift vector $\boldsymbol{\theta}^\star=\mathrm{arg}(\mathbf{v}^{(t)})$ and the precoder $\mathbf{W}^\star=[\mathbf{W}_1^\star,\cdots,\mathbf{W}_K^\star]$. 
\end{algorithmic}
\end{algorithm}
Next, we investigate the computational complexity of  Algorithm \ref{alg}. The main complexity of Algorithm \ref{alg} originates from Steps 3 and 9. Specifically, in Step 3, the complexity of the gradient calculation  is given by $\mathcal{O}(tN^{2})$, where $t$ denotes the iteration number required until converge. In Step 9,
the complexity of the EVD is in the order of $\mathcal{O}(M^{3})$. Therefore, the overall complexity of Algorithm \ref{alg} is $\mathcal{O}(max[tN^{2},M^{3}])$. By employing the BCD algorithm in \cite{pan2020multicell}, the overall computational complexity for the sum-rate maximization based on instantaneous CSI is $\mathcal{O}(t_{BCD}(max[KM^{3},LM^{2},N^3+t_{MM}N^2]))$, where $t_{BCD}$ and $t_{MM}$ represent the number of iterations required until convergence for the BCD algorithm and Majorization-Minimization (MM) algorithm for phase shift design, respectively.  Since the number of RIS elements $N$ is usually much larger than the number of BS antennas $M$, the complexity of the proposed algorithm is much lower than that of  \cite{pan2020multicell}. Moreover,  the communications overhead in the proposed algorithm is also lower due to the avoidance of frequent channel estimation and frequent adjustments of transmit beamforming and phase shifts based on statistical CSI.

\section{Simulation Results}\label{section_V}
In this section, numerical results are presented to validate our foregoing analysis and examine the sum-rate performance achieved by our proposed algorithm. For illustrative purpose, we adopt the exponential correlation model to characterize the spatial correlation among antennas and reflecting elements, and the $(i,j)$-th element of correlation matrix $\mathbf{A}$ is $\mathbf{A}(i,j)=\rho_{\mathbf{A}}^{|i-j|},\mathbf{A}\in \{\mathbf{T}_B,\mathbf{R}_I,\mathbf{T}_I,\mathbf{R}_{U_k}\}$, where $0\leq \rho_{\mathbf{A}}\leq1$ denotes the correlation coefficient between any two adjacent antenna elements \cite{feng2017power,feng2018impact}. Furthermore, the large-scale path loss is modeled as $\beta_\mathbf{C}=({d_{\mathbf{C}}}/{d_0})^{-\alpha_\mathbf{C}}$, where $d_{\mathbf{C}}$ and ${\alpha_\mathbf{C}}$ denote the communication distance and path loss exponent of link $\mathbf{C}$, respectively. $d_0=10\,\mathrm{m}$ stands for a
reference distance \cite{huan2020rate}.  Unless otherwise
mentioned, the system setup parameters are set as $M=10$, $K=2$, $d=3$, $L=3$, $d_{\mathbf{H}_{BI}}=10\,\mathrm{m}$, $d_{\mathbf{H}_{IU_1}}=d_{\mathbf{H}_{IU_2}}=30\,\mathrm{m}$, $\alpha_{\mathbf{H}_{BI}}=2.2$, $\alpha_{\mathbf{H}_{IU_1}}=\alpha_{\mathbf{H}_{IU_2}}=3$,  $P_{max}=40\mathrm{dBmW}$.
\subsection{Verifications}
\begin{figure}[!htb]
\centering
\includegraphics[width=3.5in]{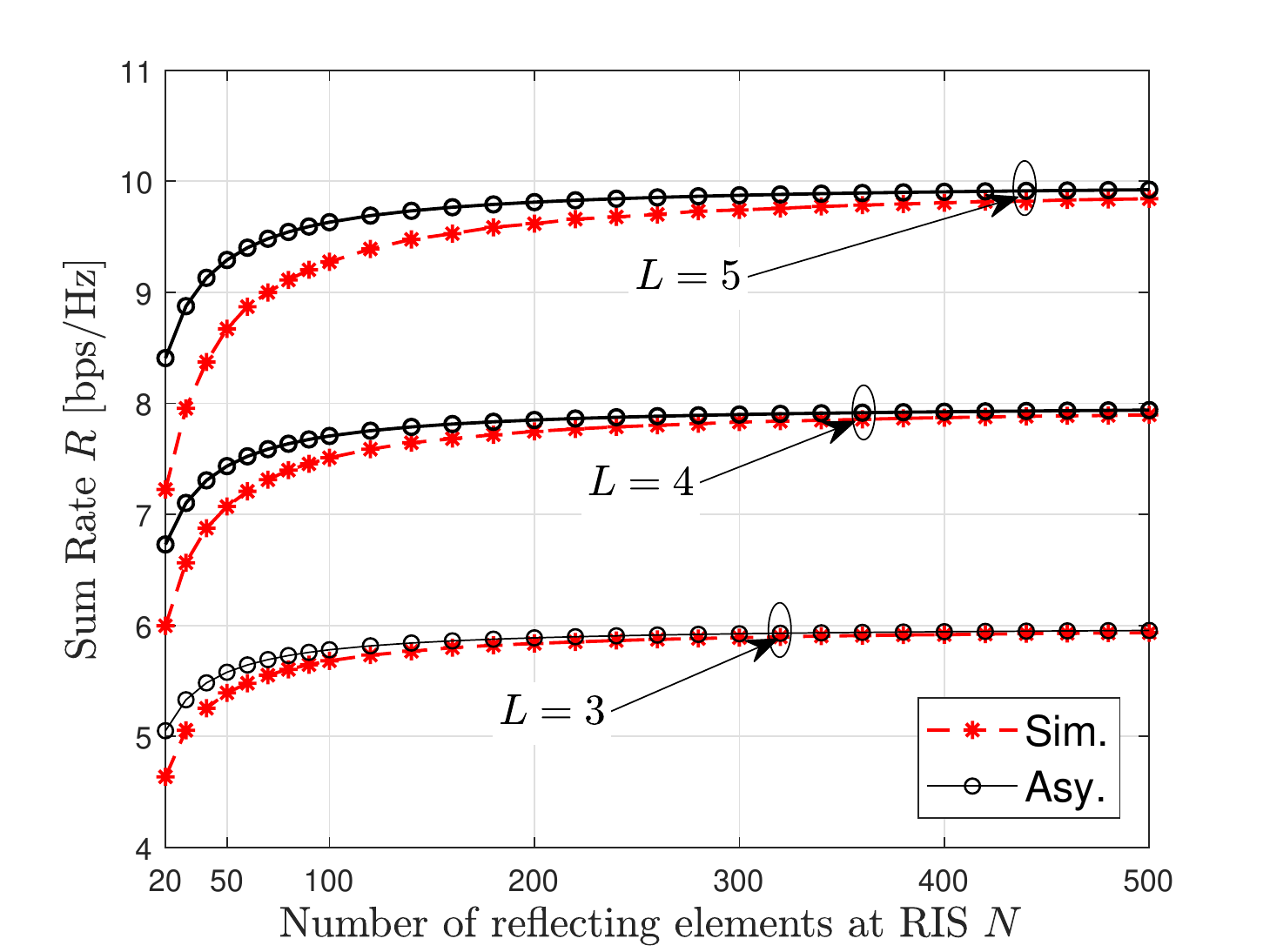}
\caption{The sum-rate versus the number of reflecting elements $N$ under different number of antennas at the user side.}\label{figure2}
\end{figure}
Fig. \ref{figure2} depicts the simulated (labeled as ``Sim.'') and asymptotic (labeled as ``Asy.'') sum-rate
under different number of user antennas $L$ as the reflecting elements number $N$ increases, wherein EPA precoder and exponential correlation model are employed by setting $\rho_\mathbf{A}=0.2$. The phase shift for each element is set as $\theta_n=\frac{\pi}{3}$. It can be seen from this figure that the simulation result approaches to the asymptotic sum-rate as the number of reflecting element increases, and coincides well with the asymptotic sum-rate for over 200 reflecting elements when $L=3$. This confirms the correctness of the deterministic equivalent analysis in Section III. Fig. \ref{figure2} also shows that the asymptotic sum-rate increases with $L$  due to the benefit of spatial diversity gain for multi-antenna receiver. In the meantime, it can be observed from this figure that the sum-rate for RIS-aided MIMO system is upper bounded. The upper bound of the sum-rate can be determined by combining (\ref{R_k_asy}) and \cite[Lemma 2]{feng2019two}.
\subsection{Impacts of Spatial Correlation}
\begin{figure}[!htb]
\centering
\includegraphics[width=3.5in]{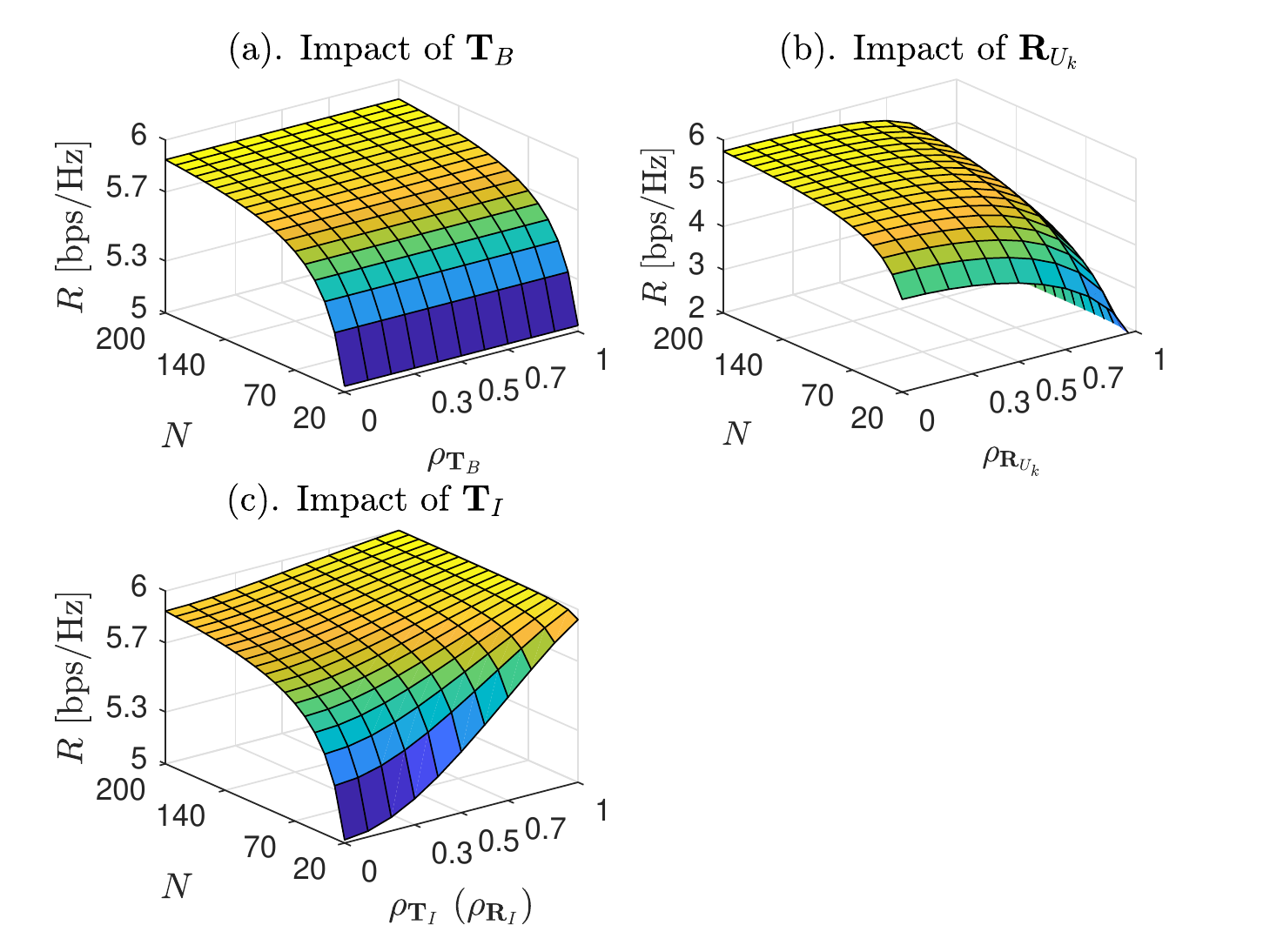}
\caption{The sum-rate versus the number of reflecting elements $N$ and correlation coefficient $\rho_{\mathbf{A}}$. }\label{figure3}
\end{figure}
The effect of spatial correlation on each terminal is examined in Fig. \ref{figure3}. It is worth mentioning that all the exponential correlation coefficients are set as $0.1$ unless otherwise noticed. Notice that in Fig. \ref{figure3}(b), we set $d_{\mathbf{H}_{IU_1}}=d_{\mathbf{H}_{IU_2}}=40 \mathrm{m}$ to study the impact of spatial correlation on user sides, in Fig. \ref{figure3}(c) we set $\rho_{\mathbf{T}_I}=\rho_{\mathbf{R}_I}$ to verify the impact of spatial correlation on RIS.  It can be observed from Figs. \ref{figure3}(a)-(c) that the increase of the number of reflecting elements causes the enhancement of the system sum-rate. We can see from Fig. \ref{figure3}(a) that the antenna correlation at the BS sides $\mathbf{T}_B$ is irrelevant to the asymptotic sum-rate. However, the spatial correlation at user $\mathbf{R}_{U_k}$ negatively influences the sum-rate as observed in Fig. \ref{figure3}(b). In contrast to the spatial correlation at user side, it is readily seen in Fig. \ref{figure3}(c) that the spatial correlation at RIS has a positive effect on the asymptotic sum-rate.
The numerical results are consistent with the analytical results in Section \ref{correlation_impact}.
\subsection{Optimal System Design}
In this subsection, we present numerical results to examine the sum-rate achieved by our proposed algorithm. We compare the performance of the proposed algorithm with other benchmarking schemes.
 \begin{figure}[!htb]
\centering
\includegraphics[width=3.5in]{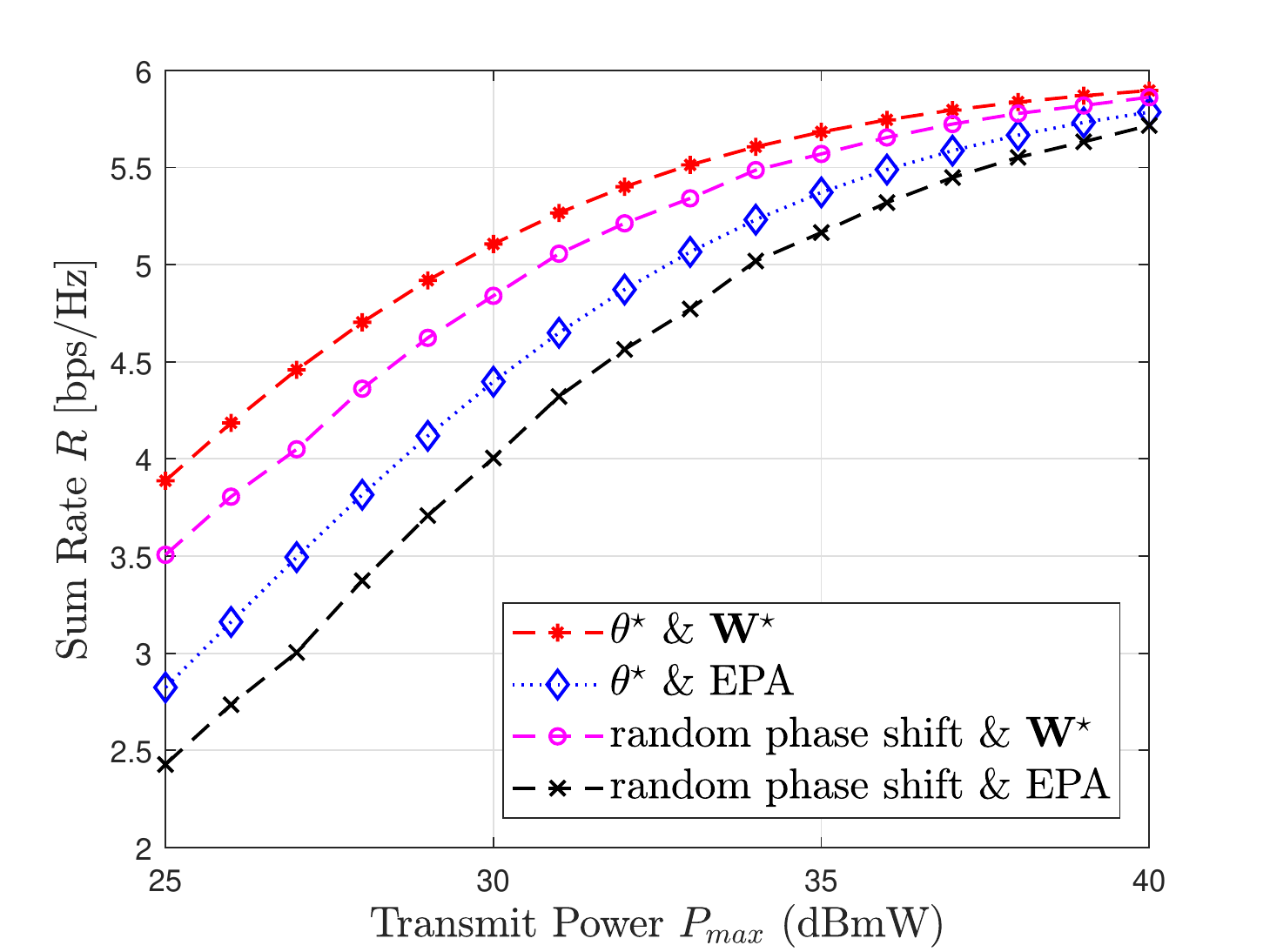}
\caption{The sum-rate versus the BS transmit power for different transmit schemes with $N=80$, $\rho_{\mathbf{T}_B}=\rho_{\mathbf{R}_{U_k}}=0.2$, and $\rho_{\mathbf{T}_I}=\rho_{\mathbf{R}_I}=0.4$. }\label{figure4}
\end{figure}

Fig. \ref{figure4} investigates the sum-rate of different transmit schemes by utilizing  ``random phase shift and EPA" for benchmarking scheme, where the exponential correlation model is adopted. It is readily found that the ``$\theta^\star$ \& $\mathbf{G}^\star$" scheme performs the best among all the transmit schemes and the benchmarking scheme shows the worst performance, which justifies the significance of joint phase shift and precoding design.  Besides, it can be seen from Fig. \ref{figure4} that the ``random phase shift \& $\mathbf{G}^\star$" scheme provides superior performance over the ``$\theta^\star$ \& EGA" scheme. The results reveal that the transmit beamforming (digital beamforming) at the BS is more effective than phase shift design (analog beamforming) at the RIS. Moreover, it can be seen from Fig. \ref{figure4} that  approximate sum-rate can achieve 60\% improvement for lower transmit power ($P_{max}=25 \,\mathrm{dBmW}$), but only achieve 4\% improvement for higher transmit power  ($P_{max}=40 \,\mathrm{dBmW}$) by adopting jointly optimization design. This is due to the fact that the optimization  improvement by (\ref{P_1}) and (\ref{P_bar}) is limited by comparing with transmit power increasing.

\begin{figure}[!htb]
\centering
\includegraphics[width=3.5in]{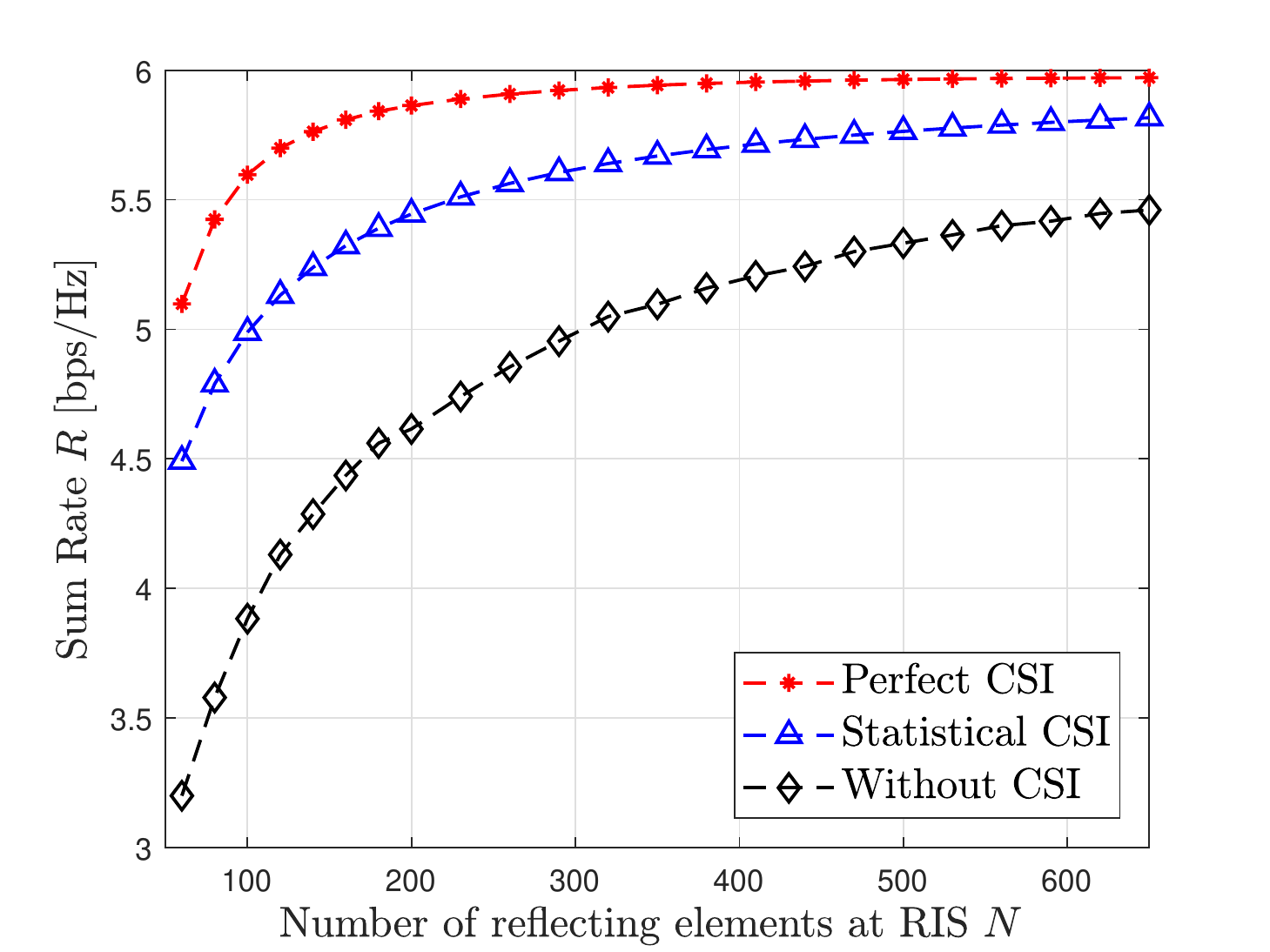}
\caption{The sum-rate versus the number of reflecting elements $N$ with $M=8$ and $P_{max}=30\mathrm{dBmW}$ for various schemes.}\label{figure8}
\end{figure}
In Fig. \ref{figure8}, we compare the sum-rate of RIS-aided multi-user MIMO systems by considering three different CSI conditions, including, perfect CSI, statistical CSI and without CSI. For perfect CSI, we employ the block coordinate descent (BCD) algorithm as \cite{pan2020multicell}, where the precoding matrices and phase shifts are alternately optimized based on instantaneous CSI, and the sum-rate is obtained by averaging over 200 channel realizations. In the case  statistical CSI, the proposed algorithm in Section \ref{new_problem} is adopted. In the case without CSI at the transmitter, the EPA precoding and random phase shift scheme are employed as in \cite{zhang2021large}. It can be observed from Fig. \ref{figure8} that the perfect CSI performs the best among the three schemes at the cost of frequent CSI reportings and high signaling overhead. However, with the increase of the reflecting elements, the gap between the perfect CSI and the statistical CSI decreases due to the channel hardening effect. Although the scheme without CSI is the simplest one for implementation, it shows the worst performance. To strike a balanced trade-off between computational complexity and performance, the proposed algorithm is very appealing particularly for the large scale of RIS-aided systems.

\section{Conclusion}
This paper aimed to enhance the RIS-aided multi-user MIMO system via the joint design of transmit beamforming and phase shifts. However, the spatial correlation has been considered in this paper, which significantly distinguish our analysis and optimal design from the prior ones. To address the above issues, the deterministic equivalent expression of sum-rate was derived by capitalizing on random matrix theory. Thanks to the tractable expression of the approximate sum-rate, some meaningful insights into the impacts of the spatial correlation on RIS-aided multi-user MIMO systems were gained with the aid of majorization theory. Besides, the deterministic equivalent not only enables the accurate evaluation of the sum-rate for a large number of reflecting elements, but also decouples the  transmit beamforming matrix and RIS phase shift vector. This motivated us to formulate a new optimization problem by using the asymptotic sum-rate and exploiting statistical CSI. An efficient algorithm was proposed based on water-filling
and PGA method. Comparing with the designs requiring instantaneous CSI in the literature, our proposed algorithm is more attractive with comparable performance but much lower complexity and communication overhead.
\appendices
\section{Proof of Theorem \ref{theorem1}}\label{app:proof_Theorem1}
In order to derive the asymptotic achievable rate of user $k$, the following lemma regarding the deterministic equivalent of the random matrix chain products is given below.
\begin{lemma}\label{lemma1}
Assume two random matrix with the form $\mathbf{H}_1=\mathbf{R}_{1}^{\frac{1}{2}}\widetilde{\mathbf{H}}_1\mathbf{T}_{1}^{\frac{1}{2}}$
and $\mathbf{H}_2=\mathbf{R}_{2}^{\frac{1}{2}}\widetilde{\mathbf{H}}_2\mathbf{T}_{2}^{\frac{1}{2}}$,
where $\mathbf{T}_{1}\in\mathbb{C}^{M\times M}$, $\mathbf{R}_{1}\in\mathbb{C}^{N\times N}$, $\mathbf{T}_{2}\in\mathbb{C}^{N\times N}$, and $\mathbf{R}_{2}\in\mathbb{C}^{L\times L}$ are Hermitian positive deterministic matrices with bounded spectral norm, i.e., $\mathrm{max}\{\left\| \mathbf{T}_{1}\right\|,\left\| \mathbf{R}_{1} \right\|,\left\| \mathbf{T}_{2}\right\|,\left\|\mathbf{R}_{2} \right\|\}<\infty$ and diagonal elements being one. $\widetilde{\mathbf{H}}_1\in\mathbb{C}^{N\times M}$,  $\widetilde{\mathbf{H}}_2\in\mathbb{C}^{L\times N}$ are two mutually independent random matrices each having i.i.d. entries such that ${\rm{vec}}(\widetilde{\mathbf{H}}_1)\sim{\cal{CN}}({\mathbf{0}_{N\times M }},\sigma_1^2{\mathbf{I}}_{M}\otimes{\mathbf{I}}_{N})$ and ${\rm{vec}}(\widetilde{\mathbf{H}}_2)\sim{\cal{CN}}({\mathbf{0}_{L\times N }},\sigma_2^2{\mathbf{I}}_{N}\otimes{\mathbf{I}}_{L})$. Given a deterministic matrix $\mathbf{W}$ with compatible dimension, then we have the matrix chain products deterministically approaches to
\begin{equation}
\frac{1}{N}\mathbf{H}_2 \mathbf{H}_{1}\mathbf{W}\mathbf{H}_{1}^\mathrm{H}\mathbf{H}_2^\mathrm{H}\xrightarrow[{N_S \to \infty }]{{a.s.}}\frac{\sigma_1^2\sigma_2^2}{N}\mathrm{Tr}\left(\mathbf{T}_2\mathbf{R}_1\right) \mathrm{Tr}\left(\mathbf{T}_1\mathbf{W}\right) \mathbf{R}_{2}.
\end{equation}
\begin{proof}
Lemma \ref{lemma1} can be  obtained
by directly replacing $\mathbf{H}_{1}$ and $\mathbf{H}_{2}$ in \cite[Theorem 1]{feng2019two}, The detailed proof are omitted here to avoid redundancy.
\end{proof}
\end{lemma}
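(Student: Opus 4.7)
My plan is to reduce the rate expression to a direct application of Lemma~\ref{lemma1} and then pass the resulting almost-sure limit through the $\log\det$ and matrix inverse by continuity. The observation driving the reduction is that each per-user signal-plus-interference matrix $\mathbf{\Psi}_j=\mathbf{H}_{IU_k}\mathbf{\Theta}\mathbf{H}_{BI}\mathbf{W}_j\mathbf{W}_j^{\mathrm{H}}\mathbf{H}_{BI}^{\mathrm{H}}\mathbf{\Theta}^{\mathrm{H}}\mathbf{H}_{IU_k}^{\mathrm{H}}$ is \emph{almost} a chain product of the form $\mathbf{H}_2\mathbf{H}_1\mathbf{W}\mathbf{H}_1^{\mathrm{H}}\mathbf{H}_2^{\mathrm{H}}$ targeted by Lemma~\ref{lemma1}, provided I can absorb the diagonal reflection $\mathbf{\Theta}$ into one of the two Kronecker factors.

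The first step is therefore to handle $\mathbf{\Theta}$. Since $\mathbf{\Theta}$ is diagonal with unit-modulus entries, $(\mathbf{\Theta}\mathbf{R}_I^{1/2})(\mathbf{\Theta}\mathbf{R}_I^{1/2})^{\mathrm{H}}=\mathbf{\Theta}\mathbf{R}_I\mathbf{\Theta}^{\mathrm{H}}$, so $\mathbf{\Theta}\mathbf{R}_I^{1/2}$ is a valid (non-principal) square-root factor of $\mathbf{\Theta}\mathbf{R}_I\mathbf{\Theta}^{\mathrm{H}}$. Any two square-root factors of the same positive semi-definite matrix differ by a unitary on the right, and the i.i.d.\ Gaussian $\widetilde{\mathbf{H}}_{BI}$ is invariant in distribution under unitary left-multiplication, so $\mathbf{\Theta}\mathbf{H}_{BI}$ is equal in distribution to $\sqrt{\beta_{\mathbf{H}_{BI}}}(\mathbf{\Theta}\mathbf{R}_I\mathbf{\Theta}^{\mathrm{H}})^{1/2}\widetilde{\mathbf{H}}_{BI}\mathbf{T}_B^{1/2}$. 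The effective receive-correlation $\mathbf{\Theta}\mathbf{R}_I\mathbf{\Theta}^{\mathrm{H}}$ has the same unit diagonal entries and the same spectral norm as $\mathbf{R}_I$, so the hypotheses of Lemma~\ref{lemma1} are preserved.

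With the assignments $(\sigma_1^2,\mathbf{T}_1,\mathbf{R}_1)=(\beta_{\mathbf{H}_{BI}},\mathbf{T}_B,\mathbf{\Theta}\mathbf{R}_I\mathbf{\Theta}^{\mathrm{H}})$, $(\sigma_2^2,\mathbf{T}_2,\mathbf{R}_2)=(\beta_{\mathbf{H}_{IU_k}},\mathbf{T}_I,\mathbf{R}_{U_k})$ and $\mathbf{W}\leftarrow\mathbf{W}_j\mathbf{W}_j^{\mathrm{H}}$, Lemma~\ref{lemma1} (multiplied through by $N$) delivers
\begin{equation*}
\mathbf{\Psi}_j\xrightarrow[N\to\infty]{a.s.}\beta_{\mathbf{H}_{BI}}\beta_{\mathbf{H}_{IU_k}}\mathrm{Tr}(\mathbf{T}_I\mathbf{\Theta}\mathbf{R}_I\mathbf{\Theta}^{\mathrm{H}})\mathrm{Tr}(\mathbf{T}_B\mathbf{W}_j\mathbf{W}_j^{\mathrm{H}})\mathbf{R}_{U_k}=c_k\mathbf{\Phi}_j,
\end{equation*}
where $c_k:=\varphi_k\mathrm{Tr}(\mathbf{T}_I\mathbf{\Theta}\mathbf{R}_I\mathbf{\Theta}^{\mathrm{H}})$, simultaneously for every $j\in\{1,\ldots,K\}$.

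Finally, since $\sigma_k^2>0$ the interference-plus-noise matrix $\sum_{i\neq k}\mathbf{\Psi}_i+\sigma_k^2\mathbf{I}_L$ is uniformly bounded below by $\sigma_k^2\mathbf{I}_L$, so $(\mathbf{\Psi}_1,\ldots,\mathbf{\Psi}_K)\mapsto\log_2\det[\mathbf{I}_L+\mathbf{\Psi}_k(\sum_{i\neq k}\mathbf{\Psi}_i+\sigma_k^2\mathbf{I}_L)^{-1}]$ is continuous at the deterministic limit point and the almost-sure convergence transfers by the continuous mapping theorem. Factoring the common scalar $c_k$ out of both the signal and interference terms leaves $\sigma_k^2/c_k$ in place of the original noise level, which is exactly the $\sigma_k^2/[\varphi_k\mathrm{Tr}(\mathbf{T}_I\mathbf{\Theta}\mathbf{R}_I\mathbf{\Theta}^{\mathrm{H}})]$ appearing in the theorem; summing over $k$ yields $R^{asy}$. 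The one step that needs care is the reduction in paragraph two: Lemma~\ref{lemma1} is stated with the principal square root $\mathbf{R}_1^{1/2}$, and I am implicitly using that its conclusion depends only on the covariance $\mathbf{R}_1$ rather than the factorization, which the Gaussian rotational-invariance argument above makes rigorous.
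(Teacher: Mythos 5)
Your attempt does not prove the statement at issue. The statement is Lemma~\ref{lemma1} itself, i.e., the almost-sure double-trace deterministic equivalent $\frac{1}{N}\mathbf{H}_2\mathbf{H}_1\mathbf{W}\mathbf{H}_1^{\mathrm{H}}\mathbf{H}_2^{\mathrm{H}}\rightarrow\frac{\sigma_1^2\sigma_2^2}{N}\mathrm{Tr}\left(\mathbf{T}_2\mathbf{R}_1\right)\mathrm{Tr}\left(\mathbf{T}_1\mathbf{W}\right)\mathbf{R}_2$, whereas your argument takes exactly this convergence as its engine (``Lemma~\ref{lemma1} \dots delivers $\mathbf{\Psi}_j\rightarrow c_k\mathbf{\Phi}_j$'') and then deduces the sum-rate limit. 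What you have written is, in substance, the paper's proof of Theorem~\ref{theorem1} in Appendix~\ref{app:proof_Theorem1}; read as a proof of Lemma~\ref{lemma1} it is circular. The paper itself disposes of the lemma by substitution into \cite[Theorem 1]{feng2019two}, so a blind proof must supply the random-matrix content that this citation carries. Concretely, one would condition on $\widetilde{\mathbf{H}}_1$ and apply the Gaussian trace lemma (law of large numbers for quadratic forms) in $\widetilde{\mathbf{H}}_2$: since the limit is a fixed $L\times L$ matrix, this gives entrywise $\frac{1}{N}\mathbf{H}_2\mathbf{A}\mathbf{H}_2^{\mathrm{H}}-\frac{\sigma_2^2}{N}\mathrm{Tr}\left(\mathbf{T}_2\mathbf{A}\right)\mathbf{R}_2\rightarrow 0$ a.s.\ with $\mathbf{A}=\mathbf{H}_1\mathbf{W}\mathbf{H}_1^{\mathrm{H}}$. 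A second application of the trace lemma, now in $\widetilde{\mathbf{H}}_1$ and applied to the sum of $M$ (fixed) quadratic forms constituting $\frac{1}{N}\mathrm{Tr}\left(\mathbf{T}_2\mathbf{H}_1\mathbf{W}\mathbf{H}_1^{\mathrm{H}}\right)$, yields the deterministic value $\frac{\sigma_1^2}{N}\mathrm{Tr}\left(\mathbf{T}_2\mathbf{R}_1\right)\mathrm{Tr}\left(\mathbf{T}_1\mathbf{W}\right)$; one finishes with fourth-moment bounds and Borel--Cantelli, using the bounded spectral norms to keep the quadratic-form estimates uniform in $N$. None of this appears in your write-up, and it is precisely the content to be proven.

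That said, the material you did write is sound and belongs to the proof of Theorem~\ref{theorem1}, where it even patches a point the paper leaves implicit. Your unitary-invariance observation --- that the lemma's conclusion depends on $\mathbf{R}_1$ only through the covariance, so the non-principal factor $\mathbf{\Theta}\mathbf{R}_I^{\frac{1}{2}}$ may stand in for $\left(\mathbf{\Theta}\mathbf{R}_I\mathbf{\Theta}^{\mathrm{H}}\right)^{\frac{1}{2}}$ --- is exactly the justification needed in \eqref{Psi_k}, where the lemma is invoked with the middle matrix $\mathbf{T}_I^{\frac{1}{2}}\mathbf{\Theta}\mathbf{R}_I^{\frac{1}{2}}$ rather than a product of two principal roots; and your continuous-mapping step through $\log\det$ and the matrix inverse, with the uniform lower bound $\sigma_k^2\mathbf{I}_L$, is a careful version of the paper's final substitution into \eqref{R_k}. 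But neither step touches the lemma's own random-matrix content, so a proof of the assigned statement is missing.
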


By dividing both the numerator and denominator terms of (\ref{R_k}) by $1/N$, the method of deterministic equivalent is then applied to derive compact expressions for $\mathbf{\Psi}_k/{N}$ and $\mathbf{\Psi}_i/{N}$. Specifically, by using Lemma \ref{lemma1}, it yields a deterministic approximation of $\mathbf{\Psi}_k/{N}$ as
\begin{equation}
\begin{aligned}\label{Psi_k}
\frac{\mathbf{\Psi}_k}{N}=&\frac{1}{N}\mathbf{H}_{IU_{k}}\mathbf{\Theta} \mathbf{H}_{BI}\mathbf{W}_k(\mathbf{H}_{IU_{k}}\mathbf{\Theta} \mathbf{H}_{BI}\mathbf{W}_k)^\mathrm{H}\\
=&\frac{1}{N}\mathbf{R}_{ U_k}^{\frac{1}{2}}\widetilde{\mathbf{H}}_{IU_{k}}\mathbf{T}_{I}^{\frac{1}{2}}\mathbf{\Theta} \mathbf{R}_{I}^{\frac{1}{2}}\widetilde{\mathbf{H}}_{BI}\mathbf{T}_{B}^{\frac{1}{2}}{\mathbf{W}_k}{\mathbf{W}_k^\mathrm{H}}\mathbf{T}_{B}^{\frac{1}{2}}\widetilde{\mathbf{H}}_{BI}^\mathrm{H}\mathbf{R}_{ I}^{\frac{1}{2}}\mathbf{\Theta}^\mathrm{H}\mathbf{T}_{I}^{\frac{1}{2}}\widetilde{\mathbf{H}}_{IU_{k}}^\mathrm{H} \mathbf{R}_{U_k}^{\frac{1}{2}} \\
\xrightarrow[{N \to \infty }]{{a.s.}}&  \beta_{\mathbf{H}_{IU_{k}}}\beta_{\mathbf{H}_{BI}} \frac{\mathrm{Tr}\left(\mathbf{T}_I\mathbf{\Theta}\mathbf{R}_I\mathbf{\Theta}^\mathrm{H}\right)}{N} \mathrm{Tr}\left(\mathbf{T}_B {\mathbf{W}_k}{\mathbf{W}_k^\mathrm{H}}\right) \mathbf{R}_{U_{k}}.
\end{aligned}
\end{equation}
Similarly, the derivation of $\mathbf{\Psi}_i/{N}$ can be obtained by following the same method.
If the number of the reflecting elements is very large, i.e., $N\rightarrow\infty$, the achievable  sum-rate approaches to a deterministic equivalent as (\ref{R_k_asy}) by substituting approximation of $\mathbf{\Psi}_k/{N}$ and $\mathbf{\Psi}_i/{N}$ into (\ref{R_k}). We thus complete the proof.
\bibliographystyle{ieeetran}
\bibliography{RIS}

\end{document}